\newcommand{\marginppp}[1]{}
\setlist{noitemsep,leftmargin=*}
\newcommand{\Xomit}[1]{}
\newtheorem{theorem}{Theorem}
\newtheorem{lemma}{Lemma}
\title{Blazing a Trail via Matrix Multiplications: A Faster Algorithm
  for Non-shortest Induced Paths\footnote{A preliminary version~\cite{ChiuL22} appeared in \emph{Proceedings
    of the 39th International Symposium on Theoretical Aspects of
    Computer Science  (STACS)},  Marseille, March 15-18 2022.}}
\author{Yung-Chung Chiu\footnote{Department of Computer Science and
Information Engineering, National Taiwan University.}
\and
Hsueh-I Lu\footnote{Corresponding author.  Department of Computer
Science and Information Engineering, National Taiwan University.
Email: hil@csie.ntu.edu.tw.}}
\begin{document}
\maketitle
\begin{abstract}
For vertices $u$ and $v$ of an $n$-vertex graph $G$, a
\emph{$uv$-trail} of $G$ is an induced $uv$-path of $G$ that is not a
shortest $uv$-path of $G$.  Berger, Seymour, and Spirkl
%\cite{BergerSS21-dm}
[\emph{Discrete Mathematics} 2021] gave the previously only known
polynomial-time algorithm, running in $O(n^{18})$ time, to either
output a $uv$-trail of $G$ or ensure that $G$ admits no~$uv$-trail.
We reduce the complexity to the time required to perform a
poly-logarithmic number of multiplications of $n^2\times n^2$ Boolean
matrices, leading to a largely improved $O(n^{4.75})$-time algorithm.
\end{abstract}

\section{Introduction}
\label{section:section1}
Let $G$ be an $n$-vertex simple, finite, undirected, and unweighted graph.  
\marginppp{p2c1} Let $V(G)$ (respectively, $E(G)$) denote
\marginppp{p2c2} the vertex (respectively, edge) set of~$G$.  For any
subgraph~$H$ of $G$, let $G[H]$ be the subgraph of~$G$ induced by
$V(H)$.  A subgraph~$H$ of $G$ is \emph{induced} if~$G[H]=H$.  That
is, an induced subgraph of $G$ is a subgraph of $G$ that can be
obtained by deleting a set of vertices together with its incident
edges from $G$.  Various kinds of induced subgraphs are involved in
the deepest results of graph theory and graph algorithms.
\marginppp{p2c3} One of the most prominent examples concerns ``perfect
graphs''.  A graph $G$ is \emph{perfect} if every induced subgraph $H$
of $G$ has chromatic number equal to its clique number.  A graph is
\emph{odd} (respectively, \emph{even}) if it has an odd (respectively,
even) number of edges.  A \emph{hole} of $G$ is an induced cycle of
$G$ having at least four edges.  The seminal Strong Perfect Graph
Theorem of Chudnovsky, Robertson, Seymour, and
Thomas~\cite{ChudnovskyRST06,ChudnovskyS09}, conjectured by Berge in
1960~\cite{Berge60,Berge61,Berge85}, states that a graph is perfect
\marginppp{p2c4} if and only if it has no odd hole or complement of an
odd hole, implying \marginppp{p2c5} that a perfect graph $G$ can be
recognized by detecting an odd hole in $G$ or its
complement~$\bar{G}$.  Based on the theorem, the first known
polynomial-time algorithms for recognizing perfect graphs
take~$O(n^{18})$~\cite{CornuejolsLV03} and
$O(n^9)$~\cite{ChudnovskyCLSV05} time.  The $O(n^9)$-time version can
be implemented to run in $O(n^{8.373})$ time via Boolean matrix
multiplications~\cite[\S6.2]{LaiLT20}.

Detecting a class of induced subgraphs can be more difficult than
detecting its counterpart that need not be
induced~\cite{DalirrooyfardV22}. For instance, detecting a path
spanning three prespecified vertices is tractable (via,
e.\,g.,~\cite{KawarabayashiKR12,RobertsonS95b}), but the {\em
  three-in-a-path} problem that detects an induced path spanning three
prespecified vertices is NP-hard~(see,
e.\,g.,~\cite{HaasH06,LaiLT20}).  Cycle detection has a similar
situation.  Detecting a cycle of length three, which has to be
induced, is the classical triangle detection problem that can also be
solved efficiently by Boolean matrix multiplications.  Although it is
tractable to detect a cycle of length at least four spanning two
prespecified vertices (also via,
e.\,g.,~\cite{KawarabayashiKR12,RobertsonS95b}), the {\em
  two-in-a-cycle} problem that detects a hole spanning two
prespecified vertices is NP-hard (and so are the corresponding
one-in-an-even-cycle and one-in-an-odd-cycle
problems)~\cite{Bienstock91,Bienstock92}.  See,
e.\,g.,~\cite[\S3.1]{RadovanovicTV21} for graph classes on which the
two-in-a-cycle problem is tractable.  Detecting a tree spanning an
arbitrary set of prespecified vertices is easy via computing the
connected components of $G$. Detecting an induced tree spanning an
arbitrary set of prespecified vertices is NP-hard~\cite{GolovachPL12}.
The \emph{three-in-a-tree} problem that detects an induced tree
spanning three prespecified vertices was first shown to be solvable in
$O(n^4)$ time~\cite{ChudnovskyS10} and then in $O(n^2\log^2 n)$
time~\cite{LaiLT20}.  The tractability of the
corresponding~$k$-in-a-tree problem for any fixed
\marginppp{p3c1} $k\geq 4$ is unknown. See~\cite{LiuT10} for an
$O(n^4)$-time algorithm for the $k$-in-a-tree problem in a graph of
girth at least $k$.

As for subgraph detection without the requirement of spanning
prespecified vertices, detecting a cycle is straightforward. Even and
odd cycles are also long known to be efficiently detectable
(see,~e.\,g.,~\cite{AlonYZ95,DahlgaardKS17,YusterZ97}). While a hole
can be detected (i.\,e., recognizing chordal graphs) in $O(n^2)$
time~\cite{RoseTL76,TarjanY84,TarjanY85} and found in $O(n^4)$
time~\cite{NikolopoulosP07}, detecting an odd (respectively, even)
hole is more difficult. There are early $O(n^3)$-time algorithms for
detecting odd and even holes in planar graphs~\cite{Hsu87,Porto92},
but the tractability of detecting an odd hole was open for decades
(see,~e.\,g.,~\cite{ChudnovskyS18,ConfortiCKV99,ConfortiCLVZ06}) until
the recent major breakthrough of Chudnovsky, Scott, Seymour, and
Spirkl~\cite{ChudnovskySSS20-jacm-odd-hole} showing that an odd hole
can be detected in $O(n^9)$ time.  Via the Strong Perfect Graph
Theorem
\marginppp{p3c2} this yields another $O(n^9)$-time algorithm to
recognize perfect graphs.  Their $O(n^9)$-time algorithm is later
implemented to run in $O(n^8)$ time~\cite{LaiLT20} and recently
improved to run in $O(n^7)$ time~\cite{ChiuLL22}, implying that the
state-of-the-art algorithm
\marginppp{p3c3} for recognizing perfect graphs runs in $O(n^7)$ time.
It is also known that a shortest odd hole can be found in $O(n^{14})$
time~\cite{ChudnovskySS21-shortest-odd-hole} and~$O(n^{13})$
time~\cite{ChiuLL22}.  As for detecting even holes, the first
polynomial-time algorithm, running in about $O(n^{40})$ time, appeared
in 1997~\cite{ConfortiCKV97,ConfortiCKV02a,ConfortiCKV02b}. It takes a
line of intensive efforts to bring down the complexity to
$O(n^{31})$~\cite{ChudnovskyKS05},
$O(n^{19})$~\cite{daSilvaV13},~$O(n^{11})$~\cite{ChangL15}, and
finally $O(n^9)$~\cite{LaiLT20}.  The tractability of finding a
shortest even hole, open for $16$
years~\cite{ChudnovskyKS05,Johnson05}, is resolved by~\cite{CheongL21}
showing with more
\marginppp{p3c5} careful analysis
\marginppp{p3c4} that the $O(n^{31})$-time algorithm
of~\cite{ChudnovskyKS05} actually outputs a shortest even hole if
there is one.  The complexity is recently reduced to
$O(n^{19})$~\cite{ChiuLL22}.  See~\cite{ChudnovskySS21}
(respectively,~\cite{CookS20}) for detecting an odd (respectively,
even) hole with a prespecified length lower bound.
See~\cite{AbrishamiCPRS21,ChudnovskyPPT20} for the first
polynomial-time algorithm for finding an independent set of maximum
weight in a graph having no hole of length at least five.
See~\cite{DalirrooyfardVW19} for upper and lower bounds on the
complexity of detecting an~$O(1)$-vertex induced subgraph.

The \emph{two-in-a-path} problem that detects an induced path spanning
two prespecified vertices is equivalent to determining whether the two
vertices are connected.  On the other hand, the corresponding
two-in-an-odd-path and two-in-an-even-path problems are
NP-hard~\cite{Bienstock91,Bienstock92}, although each of them admits
an~$O(n^7)$-time algorithm when $G$ is planar~\cite{KaminskiN12}.
See~\cite{EverettFSMPR97,FonluptU82,Meyniel87} for how an induced even
$uv$-path of $G$ affects whether 
\marginppp{p3c6} $G$ is perfect.  See~\cite{Kriesell01a} for a
conjecture by Erd\H{o}s on how an induced $uv$-path of $G$ affects the
connectivity between $u$ and~$v$ in $G$.  Finding a longest $uv$-path
in $G$ that has to (respectively, need not) be induced is
NP-hard~\cite[GT23]{GareyJ79}~(respectively,~\cite[ND29]{GareyJ79}).
%%Although GT23 does not specify end-vertices, one can show that the
%%$uv$-version remains NP-hard by adding $u$ and $v$ as new vertices
%%having a length-two path to each of the original vertices in the
%%graph.
See~\cite{GiacomoLM16,JaffkeKT20} for longest or long induced paths in
special graphs.
The presence of long induced paths in $G$ affects the tractability of
coloring $G$~\cite{GaspersHP18}.  See also~\cite{AbrishamiCPRS21} for
the first polynomial-time algorithm for finding a minimum
\marginppp{p3c7} vertex set intersecting all cycles of a graph having
no induced path of length at least five.  Detecting a
non-shortest~$uv$-path in~$G$ is easy.  A $k$-th shortest $uv$-path in
$G$ can also be found in near linear time~\cite{Eppstein98}.
See~\cite{HoangKSS13} for algorithms
\marginppp{p3c8} that list induced paths and induced cycles.
See~\cite[\S4]{ChenF07} for the parameterized complexity of detecting
an induced path with a prespecified length.  Detecting an induced
$uv$-path in a directed graph~$G$ is NP-complete (even if $G$ is
planar)~\cite{FellowsKMP95} and~$W[1]$-complete~\cite{HaasH06}.
However, the tractability of detecting a non-shortest induced
$uv$-path in an undirected graph~$G$ was unknown until the recent
result of Berger, Seymour, and Spirkl~\cite{BergerSS21-dm}.

Let $\|G\|$ denote the number of edges in $G$.  A path with
end-vertices $u$ and $v$ is a~{\em$uv$-path}.  If $P$ is a path with
$\{u,v\}\subseteq V(P)$, then let $P[u,v]$ denote the $uv$-path of
$P$. A $uv$-path~$P$ of~$G$ is \emph{shortest} if $G$ admits no
$uv$-path $Q$ with $\|Q\|<\|P\|$, so each shortest $uv$-path of~$G$ is
induced.  We call an induced~$uv$-path of~$G$ that is not a shortest
$uv$-path of~$G$ a~\emph{$uv$-trail} of $G$.  See
Figure~\ref{figure:figure1} for an example.  A graph admitting
no~$uv$-trail is \emph{$uv$-trailless}.  Berger, Seymour, and
Spirkl~\cite{BergerSS21-dm} gave the formerly only known
polynomial-time algorithm, running in~$O(n^{18})$ time, to either
output a $uv$-trail of $G$ or ensure that $G$ is~$uv$-trailless.
Their result leads to an $O(n^{21})$-time
algorithm~\cite{CookHPRSSTV21} to determine whether all holes of $G$
have the same length.  The $\tilde{O}$ notation hides an $O(\log^2 n)$
factor and~$\omega<2.373$ denotes the exponent of square-matrix
multiplication~\cite{AlmanW21,LeGall14,VassilevskaWilliams12}
throughout the paper.  We improve the time of finding a $uv$-trail to
$O(n^{4.75})$ as summarized in the following theorem, which
immediately reduces the $O(n^{21})$ time~\cite{CookHPRSSTV21} of
recognizing a graph with all holes the same length to $O(n^{7.75})$.

\begin{figure}
\centering
\includegraphics[width=.7\textwidth]{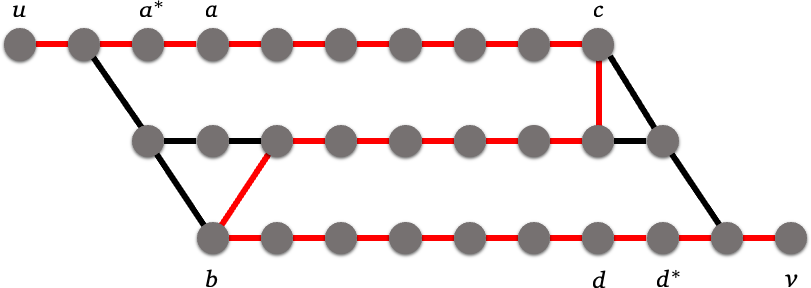}
\caption{The red $uv$-path $P$ is the only $uv$-trail of the
  $uv$-straight graph $G$. The twist pair of $P$ is~$(c,b)$. The twist
  of $P$ is $6$.  $P[a^*,c]$ and $P[b,d^*]$ form a pair of wings for
  the quadruple $(a,b,c,d)$ of $V(G)$ in~$G$.}
\label{figure:figure1}
\end{figure}

\begin{theorem}
\label{theorem:theorem1}
For any two vertices $u$ and $v$ of an $n$-vertex graph $G$, it takes
$\tilde{O}(n^{2\omega})$ time to either obtain a~$uv$-trail of $G$ or
ensure that $G$ is $uv$-trailless.
\end{theorem}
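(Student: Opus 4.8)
The plan is to begin by reducing to a layered normal form. Since $uv\in E(G)$ makes $G$ trivially $uv$-trailless, assume $\delta:=\operatorname{dist}_G(u,v)\geq 2$; run BFS from $u$ and from $v$, and observe that on the set of vertices $x$ with $\operatorname{dist}_G(u,x)+\operatorname{dist}_G(x,v)=\delta$ the two layerings coincide, yielding layers $L_0=\{u\},L_1,\dots,L_\delta=\{v\}$ with $L_i=\{x:\operatorname{dist}_G(u,x)=i\}$; call the induced subgraph on this set \emph{$uv$-straight}. A $uv$-trail of $G$ may pass through a deleted vertex --- in fact any induced $uv$-path through such a vertex is automatically a $uv$-trail, since it has length at least $\operatorname{dist}_G(u,x)+\operatorname{dist}_G(x,v)>\delta$ --- so reducing to the $uv$-straight case is not a single pruning but needs a preprocessing step that guesses a bounded number of attachment vertices and searches in the auxiliary $uv$-straight graphs they determine; keeping this overhead below $\tilde O(n^{2\omega})$ is what forces the number of guesses to be $O(1)$. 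From here on I work in a $uv$-straight graph.

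The structural heart is a decomposition of a minimum-length $uv$-trail $P$ into two \emph{wings} joined by a \emph{twist}. The layer index along $P$ starts at $0$, ends at $\delta$, changes by at most one per step, and cannot increase at every step because $\|P\|>\delta$; a maximal layer-monotone stretch near the $u$-end is the $u$-side wing $P[a^\ast,c]$ and one near the $v$-end is the $v$-side wing $P[b,d^\ast]$, and each, being a layer-monotone subpath of an induced path, is forced to be an induced \emph{shortest} path between its endpoints. The middle $P[c,b]$ is the layer-nonmonotone \emph{twist region}, delimited by the \emph{twist pair} $(c,b)$, in which $P$ accumulates its excess length. Once a constant-size quadruple $(a,b,c,d)$ of vertices is fixed --- the twist pair together with two anchors $a,d$ that, with the layering, pin down $a^\ast,d^\ast$ and the admissible wings --- detecting a $uv$-trail becomes the question: does some quadruple admit a compatible pair of wings and a compatible twist?

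The speedup comes from recasting ``there is a compatible twist'', simultaneously for all quadruples, as transitive closure in an auxiliary digraph $H$ on vertex set $V(G)\times V(G)$, so $|V(H)|=O(n^2)$. A vertex $(x,y)$ of $H$ records the current endpoint $x$ of the partially grown trail together with the single extra vertex $y$ whose adjacencies must be consulted to certify that the latest one-vertex extension keeps the path induced and keeps it from being a shortest $uv$-path; an arc of $H$ is one legal extension step, respecting the layer constraints. A $uv$-trail then corresponds to a directed path in $H$ between a source set and a sink set read off from a quadruple, so it suffices to compute the transitive closure of $H$: writing the adjacency of $H$ as an $n^2\times n^2$ Boolean matrix and squaring it $O(\log n)$ times yields the closure in $\tilde O((n^2)^\omega)=\tilde O(n^{2\omega})$ time, which is $O(n^{4.75})$ since $\omega<2.373$. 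Each of the $O(n^4)$ quadruples is then tested against the closure in $O(1)$ time, and an explicit trail (rather than a yes/no answer) is recovered by the usual witness-tracking with an additional divide-and-conquer over the matrix products, costing an extra $O(\log n)$ factor; together with the $O(\log n)$ squarings this is the $O(\log^2 n)$ overhead hidden by $\tilde O$, and since $n^4=o(n^{2\omega})$ the total is $\tilde O(n^{2\omega})$.

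I expect the principal obstacle to be reconciling the global, chordless meaning of ``induced'' with the purely local state $(x,y)$ of $H$: a chord of a trail can join vertices arbitrarily far apart along it, so a priori no $O(1)$-size state can certify inducedness incrementally. The reduction is sound only given a structural lemma asserting that, in a minimum-length $uv$-trail of a $uv$-straight graph, every potential chord is \emph{local} --- it can only join vertices lying within $O(1)$ consecutive layers and confined to the twist region --- so that a single witness $y$ plus the layering really does certify each step; its companion is a completeness claim that the wings-plus-twist parameterization misses no $uv$-trail. Proving these two facts --- controlling how an induced path of a $uv$-straight graph may leave and re-enter a fixed layer, and certifying exhaustiveness of the quadruple parameterization --- is where essentially all the case analysis will go, after which the matrix-multiplication acceleration is routine.
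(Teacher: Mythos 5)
Your plan diverges from the paper at its central step, and that step does not work as described. You propose to detect the twist region by transitive closure of an auxiliary digraph on states $(x,y)$, where $y$ is a single witness vertex certifying that each one-vertex extension ``keeps the path induced,'' and you defer soundness to a hoped-for lemma that every chord of a minimum-length $uv$-trail is \emph{local} to $O(1)$ consecutive layers. That lemma is vacuously true (in a $uv$-straight graph every edge joins vertices whose heights differ by at most one) and yet insufficient: by Lemma~\ref{lemma:lemma3} the non-monotone part $P[s,t]$ is confined to the height band $[h(t),h(s)]$ and may revisit the same layer an unbounded number of times, so a chord can join vertices that are far apart \emph{along the path} while lying in adjacent layers. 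No state of the form ``current endpoint plus one witness'' can remember the neighborhoods of all previously used vertices, so your product construction cannot certify chordlessness, and no structural lemma of the kind you postulate is available (this difficulty is exactly why induced-path detection problems are hard). The paper's use of $n^2\times n^2$ Boolean matrices is for something else entirely: the closure in Lemma~\ref{lemma:lemma6} only composes pairs of \emph{anticomplete monotone wings} $(W_1,W_2)$, one layer at a time, where gluing is sound because every cross adjacency not covered by one of the two composed entries is killed by a height gap of at least two. Inducedness of the reported trail is never certified incrementally; instead the algorithm deletes $N_G[S-c]\cup N_G[T-b]$ (minus the tips) and takes a shortest path in the pruned induced subgraph, so the concatenation is induced by construction. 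The completeness half --- that the true trail's middle part survives the pruning even though the precomputed wings $S,T$ need not coincide with the true trail's monotone prefix and suffix --- is precisely the new sidetrack Lemma~\ref{lemma:lemma4}, which your proposal does not identify; your ``completeness claim that the parameterization misses no trail'' is the easy direction, not this one. Note also that after the $\tilde O(n^{2\omega})$ precomputation the paper still spends $O(n^2\log^2 n)$ per pair $(a,b)$ (via the dynamic connectivity structure of Lemma~\ref{lemma:lemma7}), for an $\tilde O(n^4)$ main phase; your claim that each quadruple is resolved by an $O(1)$ lookup in the closure presupposes the unsound encoding above.

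A secondary gap is the reduction to the $uv$-straight case. Your suggestion to handle deleted vertices by ``guessing a bounded number of attachment vertices and searching in the auxiliary $uv$-straight graphs they determine'' is not an algorithm: the parenthetical observation that any induced $uv$-path through a non-straight vertex is a trail does not help, since finding an induced $uv$-path through a prescribed third vertex is NP-hard. The paper (following Berger, Seymour, and Spirkl, sped up in Lemma~\ref{lemma:lemma5}) instead takes a maximal $F$ with $G[F]$ $uv$-straight, detects in $O(n^\omega)$ time a component $K$ of $G-F$ with two nonadjacent attachments at distinct heights (which yields a trail directly), and otherwise contracts components into virtual edges $H'$ so that trails transfer between $H$ and $G$; some argument of this kind is needed and is absent from your proposal.
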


\paragraph*{Technical overview}
Berger et al.'s and our algorithms are based on the following
``guess-and-verify'' approach.  A subroutine $B$ taking an
$\ell$-tuple of $V(G)$ as the only argument is a
\emph{$uv$-trailblazer of degree~$\ell$} for~$G$ if running $B$ on all
$\ell$-tuples of $V(G)$ always reports a $uv$-trail of $G$ unless~$G$
is~$uv$-trailless.  We call an $\ell$-tuple of $V(G)$ on which $B$
reports a $uv$-trail of~$G$ a \emph{trail marker} for~$B$.
An~$O(f(n))$-time $uv$-trailblazer of degree $\ell$ for~$G$
immediately implies the following~$O(n^\ell\cdot f(n))$-time {\em
  trailblazing algorithm} for~$G$:
\marginppp{p4c1}
%Run $B$
%on each~$\ell$-tuple~$(a_1,\ldots,a_\ell)$ of $V(G)$ to either obtain
%a~$uv$-trail of $G$ or ensure that~$(a_1,\ldots,a_\ell)$ is not a
%trail marker for $B$. If none of the $O(n^\ell)$ iterations produces
%a~$uv$-trail of $G$, then report that $G$ is $uv$-trailless.
\begin{tabbing}
For each $\ell$-tuple $(a_1,\ldots,a_\ell)$ of $V(G)$,\\
\qquad\=if $B(a_1,\ldots,a_\ell)$ returns a $uv$-trail $P$ of $G$, then output $P$ and halt the algorithm;\\
\>otherwise, knowing that $(a_1,\ldots,a_\ell)$ is not a trail marker for $B$, proceed to the next iteration.\\
Report that $G$ is $uv$-trailless.
\end{tabbing}

A graph $H$ is \emph{$uv$-straight}~\cite{BergerSS21-dm} if
$\{u,v\}\subseteq V(H)$ and each vertex of $H$ belongs to at least one
shortest~$uv$-path of $H$.  For instance, the graph in
Figure~\ref{figure:figure1} is $uv$-straight.  Berger et al.'s
algorithm starts with an~$O(n^3)$-time preprocessing step (see
Lemma~\ref{lemma:lemma1}) that either reports a~$uv$-trail of $G$ or
obtains a $uv$-straight graph $H$ with $V(H)\subseteq V(G)$ such that
\begin{itemize}[leftmargin=*]
\item a~$uv$-trail of~$G$ can be obtained from a $uv$-trail of $H$
  in~$O(n^2)$ time and
\item if $H$ is $uv$-trailless, then so is $G$.
\end{itemize}
If no $uv$-trail is reported by the preprocessing, then the main
procedure runs an~$O(n^{18})$-time trailblazing algorithm on the
$uv$-straight graph $H$ based on an $O(n^4)$-time degree-14
$uv$-trailblazer for~$H$. As for postprocessing, if a $uv$-trail of
$H$ is obtained by the main procedure, then report a~$uv$-trail of $G$
obtainable in $O(n^2)$ time as ensured by the preprocessing.
Otherwise, report that~$G$ is $uv$-trailless.

Our $O(n^{4.75})$-time algorithm adopts the preprocessing and
postprocessing steps of Berger et al., while reducing the
preprocessing time from $O(n^3)$ to $O(n^\omega)$ (see
Lemma~\ref{lemma:lemma5}).  For the benefit of the main procedure, we
run a second preprocessing step, taking $O(n^{4.75})$ time via the
witness-matrix technique
\marginppp{p5c1} of Galil and Margalit~\cite{GalilM93}, to compute a
static data structure from which a pair of ``wings'' that are some
disjoint paths in $H$, if any, for each quadruple of~$V(H)$ can be
obtained in $O(n)$ time (see Lemma~\ref{lemma:lemma6}).  Our main
procedure is also a trailblazing algorithm, based on a
faster~$uv$-trailblazer of a lower degree for $H$: We reduce the time
from $O(n^4)$ to $O(n^2\log^2 n)$ and largely bring down the degree
from $14$ to $2$. Thus, the main procedure runs in $O(n^4\cdot\log^2
n)$ time, even faster than the second preprocessing step.

The key to our improved $uv$-trailblazer is a new observation,
described by Lemma~\ref{lemma:lemma4}, on any shortest~$uv$-trail~$P$
of a $uv$-straight graph $G$.  Specifically, Berger et al.'s
degree-$14$ $uv$-trailblazer seeks in $O(n^4)$ time a $uv$-trail $P$
of $G$ that consists of
\begin{itemize}[leftmargin=*]
\item a shortest $us$-path $S$ of $G$ containing $7$ prespecified
  vertices and a shortest $tv$-path $T$ of $G$ containing another $7$
  prespecified vertices such that $S$ and~$T$ are disjoint and
  nonadjacent
\marginppp{p5c2} in $G$ and
\item a shortest $st$-path~$Q$ of~$G_{S,T}=G-(N_G[S\cup T]\setminus
  N_G[\{s,t\}])$.
\end{itemize}
The $14$ vertices are to ensure that $s$ and $t$ are connected in
$G_{S,T}$.  Lemma~\ref{lemma:lemma4} implies that much fewer
prespecified vertices on~$S$ and $T$ suffice to guarantee that $s$ and
$t$
\marginppp{p5c3} are connected in $G_{S,T}$.  To illustrate the
usefulness of Lemma~\ref{lemma:lemma4}, we show
in~\S\ref{section:section2} that three lemmas of Berger et
al.~\cite{BergerSS21-dm} (i.\,e.,
Lemmas~\ref{lemma:lemma1},~\ref{lemma:lemma2}, and~\ref{lemma:lemma3})
together with Lemma~\ref{lemma:lemma4} already yield an
$O(n^2)$-time~$uv$-trailblazer of degree $4$ for $G$, leading to a
simple $O(n^6)$-time trailblazing algorithm on $G$.  More precisely,
as in the example of
\marginppp{p5c4} Figure~\ref{figure:figure1}, let
$\{a,b,c,d\}\subseteq V(P)$ for a shortest $uv$-trail $P$ of $G$ with
$d_P(u,a)\leq d_P(u,c) <d_P(u,b)\leq d_P(u,d)$, $d_G(u,a)=d_G(u,b)$,
and $d_G(u,c)=d_G(u,d)$ such that $n\cdot
(d_G(u,c)-d_G(u,a))+d_P(a,b)+d_P(c,d)$ is maximized.  Due to the
symmetry between $u$ and $v$ in $G$, Lemma~\ref{lemma:lemma4}
guarantees an~$O(n^2)$-time obtainable $uv$-trail of $G$ that contains
the precomputed pair of ``wings'' for the $4$-tuple $(a,b,c,d)$ (see
Lemma~\ref{lemma:lemma2}), implying that $(a,b,c,d)$ is a trail marker
for an $O(n^2)$-time $uv$-trailblazer for~$G$.

Our proof of Theorem~\ref{theorem:theorem1}
in~\S\ref{section:section3} further displays the usefulness of
Lemma~\ref{lemma:lemma4}.  We show that the aforementioned vertices
$a$ and $b$ in $P$ actually form a trail marker $(a,b)$ for
an~$O(n^2\log^2 n)$-time~$uv$-trailblazer for~$G$.  Dropping both $c$
and $d$ from the trail marker~$(a,b,c,d)$ of \S\ref{section:section2}
inevitably increases the time of the~$uv$-trailblazer for $G$.  We
manage to keep the time of a degree-two $uv$-trailblazer as low
as~$O(n^2\log^2 n)$ via the dynamic data structure of Holm,
de~Lichtenberg, and Thorup~\cite{HolmdT01} supporting efficient edge
updates and connectivity queries for $G$ (see
Lemma~\ref{lemma:lemma7}).  To make our proof of
Theorem~\ref{theorem:theorem1} in~\S\ref{section:section3}
self-contained, a simplified proof of Lemma~\ref{lemma:lemma3} is
included in~\S\ref{section:section2}.  Since Lemmas~\ref{lemma:lemma1}
and~\ref{lemma:lemma2} are implied by Lemmas~\ref{lemma:lemma5}
and~\ref{lemma:lemma6}, which are proved in~\S\ref{section:section3},
our proof for the $O(n^6)$-time algorithm in~\S\ref{section:section2}
is also self-contained.

\section[]{A simple $\boldsymbol{O(n^6)}$-time algorithm}
\label{section:section2}

Let $G$ be a connected graph containing vertices $u$ and $v$.  For any
vertices $x$ and $y$ of~$G$, let~$d_G(x,y)=\|P\|$ for a
shortest~$xy$-path~$P$ of~$G$.  Let $h(x)=d_G(u,x)$ be the {\em
  height} of a vertex~$x$ in $G$.  If $xy$ is an edge of~$G$, then
$|h(x)-h(y)|\leq 1$.  For an $H\subseteq G$,~(i)
let~$G-H=G[V(G)\setminus V(H)]$,~(ii) let $N_G(H)$ consist of the
vertices $y\in V(G-H)$ adjacent to at least one vertex of~$H$ in $G$,
and (iii) let $N_G[H]=N_G(H)\cup V(H)$.  For an~$x\in V(G)$,
let~$G-x=G-\{x\}$, let~$N_G(x)=N_G(\{x\})$, and
let~$N_G[x]=N_G[\{x\}]$.  $X$ and $Y$ are \emph{adjacent}
(respectively, \emph{anticomplete}) in~$G$ if~$N_G(X)\cap
V(Y)\ne\varnothing$ (respectively,~$N_G[X]\cap V(Y)=\varnothing$).

\begin{lemma}[{Berger et al.~\cite[Lemma~2.2]{BergerSS21-dm}}]
\label{lemma:lemma1}
For any vertices $u$ and $v$ of an $n$-vertex connected graph $G_0$,
it takes~$O(n^3)$ time to obtain (1) a~$uv$-trail of $G_0$ or (2) a
$uv$-straight graph $G$ with~$V(G)\subseteq V(G_0)$ such that~
\begin{enumerate}[label={(\alph*)}]
\item a~$uv$-trail of $G_0$ is $O(n^2)$-time obtainable from that of
  $G$ and
\item if $G$ is $uv$-trailless, then so is~$G_0$.
\end{enumerate}
\end{lemma}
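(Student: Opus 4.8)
The plan is to take $G$ to be the subgraph of $G_0$ induced by the vertices that lie on shortest $uv$-paths, and to show that a $uv$-trail of $G_0$ can be reported outright whenever this pruning is unsafe. First I would compute the whole distance matrix of $G_0$ by running breadth-first search from every vertex; since $\|G_0\|=O(n^2)$, this costs $O(n^3)$ time, which will dominate the whole procedure. Write $h(x)=d_{G_0}(u,x)$ and $D=d_{G_0}(u,v)$, and call $x\in V(G_0)$ \emph{central} if $h(x)+d_{G_0}(x,v)=D$, i.e.\ if $x$ lies on some shortest $uv$-path of $G_0$; let $C$ be the set of central vertices, so $\{u,v\}\subseteq C$. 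Put $G:=G_0[C]$. It is routine to check that every vertex of a shortest $uv$-path of $G_0$ is central, whence such a path survives inside $G$ and, having length $D\le d_G(u,v)$, stays shortest there; this shows $d_G(u,v)=D$ and that $G$ is $uv$-straight. Moreover, because $G$ is an \emph{induced} subgraph of $G_0$, any $uv$-trail of $G$ is verbatim a $uv$-trail of $G_0$: its inducedness is inherited, and its length exceeds $D=d_{G_0}(u,v)$. Hence, if we output $G=G_0[C]$, part~(a) is immediate, and part~(b) reduces to the single assertion that \emph{every} $uv$-trail of $G_0$ lies inside $C$.

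It remains to decide whether $G_0$ has a $uv$-trail that visits a non-central vertex and, if so, to produce one within the $O(n^3)$ budget. (One cannot just search for an induced $uv$-path through a \emph{prescribed} non-central vertex, since that is the NP-hard three-in-a-path problem; the leverage must come from the height function and from the freedom to report \emph{any} escaping trail.) The key step is a normalization lemma for a shortest $uv$-trail $P$ of $G_0$ that leaves $C$. Cut $P$ at the last vertex $a$ for which the prefix $P[u,a]$ is still a shortest $ua$-path of $G_0$ and at the first vertex $b$ for which the suffix $P[b,v]$ is still a shortest $bv$-path of $G_0$. One argues that $a$ precedes $b$ on $P$, that the middle segment $P[a,b]$ is short and has a height profile that descends and then ascends exactly once, and --- using the inducedness of $P$ together with the minimality of $\|P\|$ among $uv$-trails of $G_0$ --- that the two outer arms $P[u,a]$ and $P[b,v]$ may be replaced by breadth-first-search paths without creating chords. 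The upshot is that an escaping $uv$-trail, if one exists, may be assumed to have a bounded description: a shortest $ua$-path, a shortest $bv$-path, and an $O(1)$-length detour through non-central vertices joining $a$ to $b$, the three pieces being pairwise disjoint and chordless. Granting this, detection and extraction reduce to enumerating the $O(1)$ many ``pivot'' vertices of the detour together with $a$ and $b$ and, for each choice, assembling the candidate trail from the precomputed shortest-path trees and testing disjointness and chordlessness --- non-shortestness being automatic, since the trail then passes through a non-central vertex. This enumeration fits within the $O(n^3)$ budget, which is in any case already spent on the all-pairs breadth-first search.

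The main obstacle is the normalization lemma itself: showing that the outer arms of a shortest escaping $uv$-trail can be swapped for canonical shortest paths while preserving both inducedness and non-shortestness. The delicate point is that a breadth-first-search arm could a priori be adjacent to the middle segment or to the opposite arm, creating a chord; I would handle this by the standard device of shortcutting such a chord and arguing, from the minimality of $\|P\|$ and the placement of $a$ and $b$ on the height profile, that the shortcut cannot collapse the path down to a shortest $uv$-path, so it yields a strictly shorter $uv$-trail and a contradiction. With the normalization lemma available, the verification that $G_0[C]$ is $uv$-straight and the final pivot enumeration are both straightforward, and the overall running time is $O(n^3)$.
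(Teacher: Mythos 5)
There is a genuine gap, and it sits exactly where your plan puts all the weight. Choosing $G=G_0[C]$ (the vertices on shortest $uv$-paths) makes condition~(a) easy, but condition~(b) is simply false for this choice unless your preprocessing can, on its own, produce a trail whenever every $uv$-trail of $G_0$ leaves $C$: take a shortest path $z_0z_1\cdots z_D$ with $u=z_0$, $v=z_D$, plus a long path $w_1\cdots w_L$ joined only by the edges $z_1w_1$ and $z_3w_L$; then $C=\{z_0,\dots,z_D\}$, $G_0[C]$ is $uv$-trailless, yet $G_0$ has a $uv$-trail whose excursion through non-central vertices has length $\Omega(n)$. So the entire content of the lemma is pushed into your ``normalization lemma,'' which you do not prove, and whose quantitative claims are not supported by anything in the problem's structure. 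Nothing forces the segment between the last shortest prefix and the first shortest suffix of a shortest trail to have $O(1)$ length, nor to descend and ascend only once: already in the $uv$-straight case the portion of a shortest $uv$-trail between its twist pair can be arbitrarily long (which is precisely why both Berger et al.\ and this paper recover that portion by a shortest-path/connectivity computation inside a pruned graph, never by enumerating its vertices), and there is no reason the escaping case behaves better. Likewise, the step where you swap the two arms for BFS-tree shortest paths ``without creating chords'' is essentially the hard content of Lemma~\ref{lemma:lemma4} and the trailblazer analysis, here needed for general (non-straight) graphs where it is harder, not easier; note also that in the example above your $a$ and $b$ land deep inside the non-central excursion, so the arms themselves run through $G_0-C$ and interact with the unknown remainder of the trail. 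Finally, even granting the normalization, the complexity claim is unsubstantiated: enumerating $(a,b)$ is already $n^2$ choices, each candidate path has $\Theta(n)$ vertices and must be tested for chordlessness, and any additional pivots multiply this further, so ``fits within the $O(n^3)$ budget'' does not follow.

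The paper's route (its Lemma~\ref{lemma:lemma5}, adopted from Berger et al., implies Lemma~\ref{lemma:lemma1}) avoids this trap by \emph{not} returning an induced subgraph on central vertices. It takes a maximal $F$ with $G_0[F]$ $uv$-straight and then branches: if some connected component $K$ of $G_0-F$ has two nonadjacent attachments $x,y\in N_{G_0}(K)$ at distinct heights, a trail is output directly as a shortest $uv$-path of $G_0[P_x\cup K\cup P_y]$ — found by a path computation through $K$, with no bound assumed on the excursion's length; otherwise it \emph{augments} $G_0[F]$ with an edge $xy$ for every pair $x,y\in N_{G_0}(K)$, and proves (a) by re-expanding added edges into shortest paths through the components (inducedness holds because any several attachments of one component are pairwise adjacent in the augmented graph) and (b) by contracting the excursions of a trail of $G_0$ into added edges. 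This augmentation is exactly what lets condition~(b) hold without solving an escaping-trail detection problem during preprocessing; your proposal omits it, and the burden it carried reappears as the unproven bounded-detour claim. To repair the argument, prove the lemma along these lines rather than through the $O(1)$-detour normalization.
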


A path of $G$ is \emph{monotone}~\cite{BergerSS21-dm} if all of its
vertices have distinct heights in $G$.  A monotone~$xy$-path of~$G$ is
a shortest $xy$-path of $G$. The converse may not hold.  A shortest
$xy$-path of a $uv$-straight
\marginppp{p6c1} graph $G$ with $\{x,y\}\cap \{u,v\}\ne\varnothing$ is
monotone.  A monotone $ca^*$-path
\marginppp{p6c2} $W_1$ of $G$ containing a vertex $a$ and a
monotone~$bd^*$-path $W_2$ of $G$ containing a vertex $d$ with
\[
h(a^*)+1=h(a)=h(b)\leq h(c)=h(d)=h(d^*)-1
\] 
form a pair $(W_1,W_2)$ of \emph{wings} for the quadruple $(a,b,c,d)$
of $V(G)$ in $G$ if
\[
d_{G[W_1\cup W_2]}(a^*,d^*)> \|W_1\|+\|W_2\|,
\] 
that is, $W_1-c$ (respectively, $W_1$) and $W_2$ (respectively,
$W_2-b$) are anticomplete in $G$.  A
\marginppp{p6c3} quadruple $(a,b,c,d)$ of $V(G)$ is \emph{winged} in
$G$ if $G$ admits a pair of wings for $(a,b,c,d)$.  See
Figure~\ref{figure:figure1} for an example.

\begin{lemma}[{Implicit in Berger et al.~\cite[Lemma~2.1]{BergerSS21-dm}}]
\label{lemma:lemma2}
It takes $O(n^6)$ time 
\marginppp{p6c4} to compute a data structure from which the following
statements hold for any quadruple $(a,b,c,d)$ of $V(G)$ for an
$n$-vertex graph $G$:
\begin{enumerate}
\item It takes $O(1)$ time to determine whether $(a,b,c,d)$ is winged
  in $G$.
  
\item If $(a,b,c,d)$ is winged in $G$, then it takes $O(n)$ time to
  obtain a pair of wings for $(a,b,c,d)$ in $G$.
\end{enumerate}
\end{lemma}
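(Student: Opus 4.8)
The plan is to reduce the predicate ``$(a,b,c,d)$ is winged'' to a reachability query in an auxiliary layered digraph $\mathcal{H}$ on pairs of equal-height vertices of $G$, to answer all such queries by one breadth-first search per source of $\mathcal{H}$, and to recover a witnessing pair of wings by tracing the search tree. After one BFS from $u$ to obtain the heights $h$, I would let $V(\mathcal{H})$ be the set of ordered pairs $(x,y)$ with $h(x)=h(y)$, $x\ne y$, and $xy\notin E(G)$, with an arc from $(x,y)\in V(\mathcal{H})$ to $(x',y')\in V(\mathcal{H})$ precisely when $h(x')=h(x)+1$, $xx'\in E(G)$, $yy'\in E(G)$, $xy'\notin E(G)$, and $x'y\notin E(G)$. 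The key observations are that a monotone $ca^*$-path through $a$ with $h(a^*)+1=h(a)$ is forced to be the single vertex $a^*$ followed by a monotone $ac$-path, that when $h(a)<h(c)$ a shortest $ac$-path must increase in height at every step and hence visits exactly one vertex of each height in $[h(a),h(c)]$ and is monotone (and symmetrically for $bd^*$-paths through $d$), and that consequently, matching at each height the vertex of the interior $a$-to-$c$ path with the vertex of the interior $b$-to-$d$ path, a directed $(a,b)$-to-$(c,d)$ path in $\mathcal{H}$ is the same object as a pair of internally vertex-disjoint, mutually anticomplete monotone paths, one from $a$ to $c$ and one from $b$ to $d$.

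This settles the case $h(c)-h(a)\ge 2$: a pair of wings for $(a,b,c,d)$ exists iff (i) $(c,d)$ is reachable from $(a,b)$ in $\mathcal{H}$, (ii) $a$ has a neighbour of height $h(a)-1$ non-adjacent to $b$, and (iii) $d$ has a neighbour of height $h(d)+1$ non-adjacent to $c$. Conditions (ii) and (iii) are separable and depend only on $(a,b)$ and on $(c,d)$ because, by the height gaps, a candidate $a^*$ can be equal or adjacent to no vertex of the other wing except $b$, and symmetrically a candidate $d^*$ meets the other wing only at $c$; and $h(c)-h(a)\ge 2$ forces $cb\notin E(G)$, so the ``$cb$-exception'' in the definition of wings is vacuous and ``anticomplete'' here means simply disjoint with no edge between. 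That the displayed condition $d_{G[W_1\cup W_2]}(a^*,d^*)>\|W_1\|+\|W_2\|$ is equivalent to ``$W_1$ and $W_2$ are disjoint with at most the edge $cb$ between them'' follows by a short exchange argument: any other edge, or any shared vertex, yields an $a^*d^*$-walk of length at most $\|W_1\|+\|W_2\|$ inside $G[W_1\cup W_2]$, whereas if the only edge between them is $cb$ (or there is none) the shortest such walk uses all of $W_1$ and all of $W_2$. The degenerate levels $h(c)-h(a)\in\{0,1\}$ I would dispose of directly from the adjacency matrix and the heights, noting that then a pair of wings consists of two paths on at most three vertices each (and forces $c=a$, $d=b$ when the gap is $0$, and $ac,bd\in E(G)$ when the gap is $1$).

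For the preprocessing I would run a BFS in $\mathcal{H}$ from every vertex, storing a reachability bit and a predecessor pointer for each ordered pair of vertices of $\mathcal{H}$; since $|V(\mathcal{H})|=O(n^2)$ and the out-neighbours of $(x,y)$ are enumerable in $O(\deg_G(x)\deg_G(y))$ time, so that $|E(\mathcal{H})|=O(n^4)$ and one BFS costs $O(n^4)$, the total is $O(n^2\cdot n^4)=O(n^6)$ time and $O(n^4)$ space. Within the same budget I would tabulate, for every ordered pair $(x,y)$ with $h(x)=h(y)$, one neighbour of $x$ of height $h(x)-1$ non-adjacent to $y$ and one of height $h(x)+1$ non-adjacent to $y$, if such exist, in $O(n^3)$ time altogether. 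A query ``is $(a,b,c,d)$ winged?'' is then $O(1)$: reject unless $h(a)=h(b)\le h(c)=h(d)$; if the gap is at least $2$, consult the reachability bit of $((a,b),(c,d))$ and the tabulated down-neighbour of $a$ against $b$ and up-neighbour of $d$ against $c$; if the gap is at most $1$, decide directly. When the answer is positive, I would trace the predecessor pointers from $(c,d)$ back to $(a,b)$ --- a walk of at most $n$ arcs --- to read off the two interior monotone paths, then prepend the tabulated down-neighbour of $a$ and append the tabulated up-neighbour of $d$, outputting the pair of wings in $O(n)$ time.

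The main obstacle, and essentially the only place where care is required, is nailing down the arc relation of $\mathcal{H}$ so that its directed paths correspond bijectively to pairs of anticomplete monotone paths: one must verify that every edge of $G$ between $W_1$ and $W_2$ other than $cb$, and every would-be shared vertex, is ruled out by the non-adjacency imposed at some vertex or arc of $\mathcal{H}$ traversed by the corresponding path; and, separately, settle by hand the small levels $h(c)-h(a)\le 1$, where the $cb$-exception is genuinely exploited. Building $\mathcal{H}$, running the searches, tabulating the neighbours, and tracing the search tree are all routine.
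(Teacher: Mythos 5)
Your construction is correct and gives exactly the claimed bounds, but it is not the route the paper takes: the paper never proves Lemma~\ref{lemma:lemma2} directly, remarking only that the proof of~\cite[Lemma~2.1]{BergerSS21-dm} adjusts easily, and instead establishes the stronger Lemma~\ref{lemma:lemma6} by encoding, in an $n^2\times n^2$ Boolean matrix indexed by same-height vertex pairs, the existence of anticomplete wings for quadruples of height gap at most one, then taking the transitive closure by repeated squaring in $\tilde{O}(n^{2\omega})$ time and recovering wings from the Galil--Margalit witness matrix~\cite{GalilM93}. The combinatorial core is the same in both arguments---reachability among ordered non-adjacent same-height pairs, the observation that vertices whose heights differ by at least two are automatically non-adjacent so only the within-level and consecutive-level constraints of your digraph $\mathcal{H}$ matter, and a separate treatment of the quadruples with $h(c)\leq h(a)+1$ where the $cb$-exception can occur---but the algorithmic realization differs: you run an explicit per-source BFS with predecessor pointers ($O(n^2)$ sources, $O(n^4)$ arcs, hence $O(n^6)$ time and $O(n^4)$ space), and you factor the end-vertices $a^*$ and $d^*$ out of the reachability structure into an $O(n^3)$-time side table, which cleanly decouples conditions (ii) and (iii) from condition (i) thanks to the height gaps. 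What the paper's matrix-powering route buys is the $\tilde{O}(n^{2\omega})$ bound needed for Theorem~\ref{theorem:theorem1}, which a per-source BFS cannot match; what yours buys is a fully elementary proof of Lemma~\ref{lemma:lemma2} with direct wing reconstruction from BFS trees and no witness-matrix machinery. Two small points to tidy: the two monotone paths read off a directed path of $\mathcal{H}$ are entirely vertex-disjoint (not merely internally so); and the level-by-level structure of the interior of a wing follows from monotonicity (distinct heights, consecutive heights differing by one forces strict ascent), so your appeal to shortest $ac$-paths is unnecessary and, for arbitrary shortest paths, not literally true. Finally, note that your $O(1)$ decision for the gap-$\leq 1$ quadruples also needs the tabulated $a^*$/$d^*$ entries together with $O(1)$ adjacency tests---data you do precompute---rather than the adjacency matrix and heights alone.
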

We comment that the proof of~\cite[Lemma~2.1]{BergerSS21-dm} is easily
adjustable into one for Lemma~\ref{lemma:lemma2}.  Moreover,
see~\S\ref{section:section3} for the proof of
Lemma~\ref{lemma:lemma6}, which implies and improves upon
Lemma~\ref{lemma:lemma2}.

Let $P$ be a $uv$-path of a $uv$-straight graph $G$.
The \marginppp{p6c6} \emph{twist pair} of $P$ is the vertex pair
$(s,t)$ of $P$ such that $P[u,s]$ and $P[t,v]$ are the
\marginppp{p6c5} maximal monotone prefix and suffix of $P$.  The
\emph{twist}~\cite{BergerSS21-dm} of~$P$ is~$h(s)-h(t)$ for the twist
pair $(s,t)$ of $P$.  See also Figure~\ref{figure:figure1} for an
example. If $(s,t)$ is the twist pair of a $uv$-path~$P$ of $G$,
then~$P[u,s]$ and~$P[t,v]$ are disjoint if and only if $P$ is a
non-shortest $uv$-path of $G$.  The next lemma is also needed
in~\S\ref{section:section3}. To make our proof of
Theorem~\ref{theorem:theorem1} in~\S\ref{section:section3}
self-contained, we include a proof of Lemma~\ref{lemma:lemma3}
simplified from that of Berger et al.~\cite[Lemma~2.3]{BergerSS21-dm}.

\begin{lemma}[{Berger et al.~\cite[Lemma~2.3]{BergerSS21-dm}}]
\label{lemma:lemma3}
If $(s,t)$ is the twist pair of a shortest $uv$-trail $P$ of a
$uv$-straight graph $G$, then $h(s)\geq h(x)\geq h(t)$ holds for each
vertex $x$ of $P[s,t]$.
\end{lemma}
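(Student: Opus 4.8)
The plan is to prove only the upper bound $h(x)\le h(s)$ for every vertex $x$ of $P[s,t]$; the lower bound $h(x)\ge h(t)$ then follows by symmetry, applying the upper bound to the pair $(v,u)$ and to $P$ read from $v$ to $u$: the graph $G$ is also $vu$-straight, the twist pair of this reading of $P$ is $(t,s)$, and the resulting inequality $d_G(v,x)\le d_G(v,t)$ rearranges to $h(x)\ge h(t)$ because $d_G(v,y)=d_G(u,v)-h(y)$ for every vertex $y$. Before the main step I would record a few immediate facts: in a $uv$-straight graph $v$ is the unique vertex of maximum height $d_G(u,v)$, so any monotone path ending at $u$ or at $v$ is strictly height-increasing; in particular $P[u,s]$ is a shortest $us$-path of length $h(s)$ all of whose vertices have height at most $h(s)$, and $\|P[t,v]\|=d_G(u,v)-h(t)$. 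Since $P$ is a trail it is non-shortest, so $P[u,s]$ and $P[t,v]$ are disjoint and $s$ strictly precedes $t$ on $P$; and since $P$ is not monotone, $s\ne u$ and $t\ne v$. Finally, the vertex $s^{+}$ of $P$ just after $s$ has $h(s^{+})\le h(s)$ and the vertex $t^{-}$ of $P$ just before $t$ has $h(t^{-})\ge h(t)$, for otherwise $P[u,s]$ or $P[t,v]$ would fail to be the maximal monotone prefix or suffix.

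Suppose for contradiction that $M:=\max\{h(x):x\in V(P[s,t])\}>h(s)$, and let $z$ be the first vertex of $P[s,t]$, going from $s$ to $t$, with $h(z)=M$; then $z\notin\{u,s,v\}$. The heart of the argument is the rerouting $W:=P[u,z]\cup Q$, where $Q$ is a monotone, hence shortest, $zv$-path of $G$. Every vertex of $P[u,z]$ other than $z$ has height at most $M-1$: those on $P[u,s]$ have height at most $h(s)<M$, and those on $P[s,z]$ other than $z$ have height below $M$ by the choice of $z$. Every vertex of $Q$ other than $z$ has height at least $M+1$. Hence $P[u,z]$ and $Q$ meet only in $z$ and have no other edge between them, so $W$ is an induced $uv$-path. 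It is moreover non-shortest: $P[s,z]$ leaves $s$ through $s^{+}$ with $h(s^{+})\le h(s)<M$, so $\|P[s,z]\|\ge 1+(M-h(s))$, whence $\|P[u,z]\|=h(s)+\|P[s,z]\|\ge M+1>h(z)=d_G(u,z)$, and therefore $\|W\|=\|P[u,z]\|+d_G(z,v)>d_G(u,z)+d_G(z,v)=d_G(u,v)$, using that $z$ lies on a shortest $uv$-path of the $uv$-straight graph $G$. Thus $W$ is a $uv$-trail.

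It remains to show $\|W\|<\|P\|$, which contradicts the minimality of $P$. Since $\|W\|=\|P[u,z]\|+d_G(z,v)$ and $\|P\|=\|P[u,z]\|+\|P[z,v]\|$, it suffices that $P[z,v]$ is a non-shortest $zv$-path. Using $\|P[z,v]\|=\|P[z,t]\|+\|P[t,v]\|=\|P[z,t]\|+\bigl(d_G(u,v)-h(t)\bigr)$ and $d_G(z,v)=d_G(u,v)-M$, the excess is $\|P[z,v]\|-d_G(z,v)=\|P[z,t]\|+\bigl(M-h(t)\bigr)$, which is nonnegative because $t\in V(P[s,t])$ forces $M\ge h(t)$. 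This excess vanishes only if $z=t$ and $M=h(t)$; but if $z=t$, then $z$ being the first vertex of $P[s,t]$ of height $M=h(t)$ forces every earlier vertex of $P[s,t]$, in particular $t^{-}$, to have height strictly less than $h(t)$, contradicting $h(t^{-})\ge h(t)$. Hence $P[z,v]$ is non-shortest, $\|W\|<\|P\|$, and the proof is complete. The step needing the most care is verifying that $W$ is simultaneously an \emph{induced} path and a \emph{non-shortest} one: inducedness is forced by the height stratification alone, since adjacent vertices differ in height by at most $1$, so the low part $P[u,z]$ and the high part $Q$ cannot interact except at $z$; the non-shortness rests on $P[u,z]$ already being a strict detour, which is precisely where the choice of $z$ as the first maximum-height vertex of $P[s,t]$ together with the bound $h(s^{+})\le h(s)$ come in.
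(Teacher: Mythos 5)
Your proof is correct and follows essentially the same route as the paper's: you reroute at the first maximum-height vertex $z$ of $P[s,t]$ by concatenating $P[u,z]$ with a monotone $zv$-path, obtaining a $uv$-trail shorter than $P$, which is exactly the paper's construction (the paper chooses the same vertex via maximizing $n\cdot h(x)+d_{P[s,t]}(x,t)$), and your symmetry reduction for the bound $h(x)\geq h(t)$ corresponds to the paper's mirror construction of a shortest $ux$-path concatenated with $P[x,v]$. The only difference is that you spell out the inducedness, the strict length decrease, and the boundary case $z=t$, which the paper leaves implicit.
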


\begin{proof}
Let~$I=V(P[s,t])\setminus \{s,t\}$.
\marginppp{p6c7} Let $s^*$ (respectively, $t^*$) be the neighbor of
$s$ (respectively,~$t$) in $P[s,t]$.  By definition of $(s,t)$, we
have $h(s^*)\leq h(s)$ and $h(t^*)\geq h(t)$.  If~$I=\varnothing$,
then~$(s^*,t^*)=(t,s)$ implies the lemma.  Otherwise, it suffices to
prove $h(s)\geq h(x)\geq h(t)$ for each~$x\in I$.  If~$h(x)>h(s)$ were
true for the $x\in I$ maximizing $n\cdot h(x)+d_{P[s,t]}(x,t)$, then
the concatenation of $P[u,x]$ and a shortest $xv$-path of~$G$ is
a~$uv$-trail of~$G$ shorter than $P$.
\marginppp{p6c8} If $h(x)<h(t)$ were true for the~$x\in I$ minimizing
$n\cdot h(x)+d_{P[s,t]}(x,t)$, then the concatenation of a shortest
$ux$-path of $G$ and $P[x,v]$ is a $uv$-trail of~$G$ shorter than~$P$.
\marginppp{p6c9}
\end{proof}

\begin{figure}
\centering
\includegraphics[width=.7\textwidth]{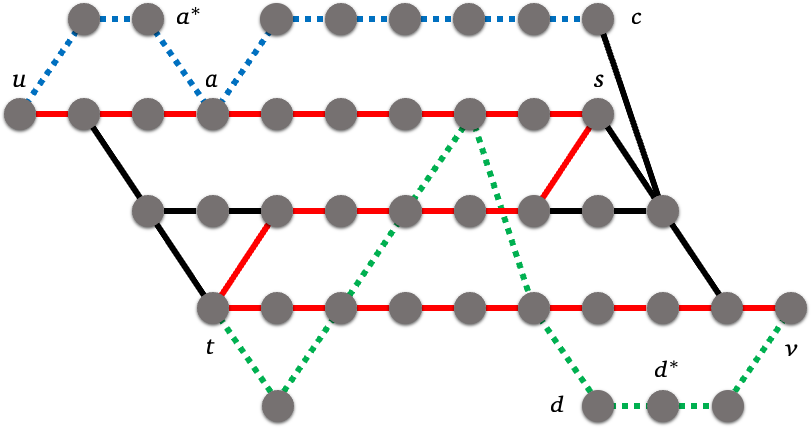}
\caption{The blue dotted $uc$-path is a sidetrack $S$ for the red $uv$-trail
  $P$ of the $uv$-straight graph $G$.  Each of $P[t,v]$ and the green dotted
  $tv$-path can be a monotone~$tv$-path $T$ satisfying
  Condition~S\ref{S1}.
}
\label{figure:figure2}
\end{figure}

Let $P$ be a $uv$-trail of a $uv$-straight graph $G$ with twist pair
$(s,t)$. Lemma~\ref{lemma:lemma3} implies $h(t)\leq h(s)$.  A monotone
$uc$-path $S$ of $G$ with $h(c)=h(s)$ is a \emph{sidetrack} for $P$ if
the
\marginppp{p6c10} following \emph{Conditions~S} hold.
\begin{enumerate}[label={\emph{S}\arabic*:}, ref={\arabic*}]
\item 
\label{S1} 
$G$ contains a monotone $tv$-path $T$ with $d_{G[S\cup T]}(u,v)>
\|S\|+\|T\|$.

\item 
\label{S2}
$S$ contains the vertex $a$ of $P[u,s]$ with $h(a)=h(t)$.
\end{enumerate}
We comment that existence
\marginppp{p6c11} and unique of $a$ in Condition~S\ref{S2} follows
from $h(t)\leq h(s)$ and the fact that $P[u,s]$ is monotone
\marginppp{p6c12} by definition of twist pair $(s,t)$.
Condition~S\ref{S1} is equivalent to the statement that $S-c$
(respectively, $S$) and $T$ (respectively, $T-t$) are anticomplete in
$G$.  Let $a^*$ be the vertex of the monotone $uc$-path $S$ with
$h(a^*)=h(a)-1$.  Let $dd^*$ be the edge of the
\marginppp{p7c2} monotone $tv$-path $T$ with~$h(s)=h(d)=h(d^*)-1$.
Condition~S\ref{S1} implies that $S[a^*,c]$ and $T[t,d^*]$ form a pair
of wings for~$(a,t,c,d)$ in~$G$.  See Figure~\ref{figure:figure2}
\marginppp{p7c1} for an example.  The key to our largely
improved~$uv$-trailblazers in~\S\ref{section:section2}
and~\S\ref{section:section3} is the following lemma, whose proof is
illustrated in Figure~\ref{figure:figure3}.

\begin{figure}
\centering
\includegraphics[width=.7\textwidth]{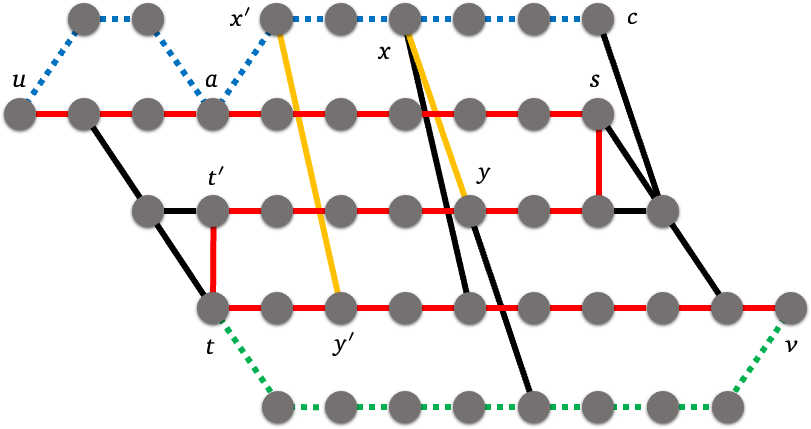}  
\caption{An illustration for the proof of Lemma~\ref{lemma:lemma4}.
  The red path denotes a shortest $uv$-trail $P$ of the~$uv$-straight
  graph $G$. The blue dotted monotone path denotes a sidetrack $S$ for
  $P$. The green dotted path denotes a monotone path $T$ satisfying
  Condition~S\ref{S1}.}
\label{figure:figure3}
\end{figure}

\begin{lemma}
\label{lemma:lemma4}
If $S$ is a sidetrack for a shortest $uv$-trail $P$ of a $uv$-straight
graph $G$ with twist pair $(s,t)$, then
\[
d_{G[S\cup P[s,t]]}(u,t)\geq d_P(u,t).
\]
\end{lemma}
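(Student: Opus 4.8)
The goal is to show that any walk from $u$ to $t$ inside $G[S\cup P[s,t]]$ has length at least $d_P(u,t)=\|P[u,s]\|+\|P[s,t]\|$. The natural strategy is to argue about heights. Recall $S$ is a monotone $uc$-path with $h(c)=h(s)$, and by Lemma~\ref{lemma:lemma3} every vertex $x$ of $P[s,t]$ satisfies $h(t)\le h(x)\le h(s)=h(c)$. So in the combined graph $G[S\cup P[s,t]]$, the height of $t$ is $h(t)$ and the ``highest'' relevant level is $h(s)$. A shortest $ut$-path in $G[S\cup P[s,t]]$ must climb from height $h(u)=0$ up to some point and then come down to height $h(t)$; I would try to show it is forced to reach height $h(s)$ along the way, giving it length at least $h(s)+(h(s)-h(t))$, which one then compares with $d_P(u,t)$.

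**Key steps.**
First I would set up the vertex $a\in V(S)\cap V(P[u,s])$ with $h(a)=h(t)$ guaranteed by Condition~S\ref{S2}, and the vertex $a^*\in V(S)$ with $h(a^*)=h(a)-1$. The anticompleteness encoded in Condition~S\ref{S1} (that $S-c$ and $T$, resp.\ $S$ and $T-t$, are anticomplete, hence $S[a^*,c]$ and $T[t,d^*]$ are wings for $(a,t,c,d)$) is the structural fact I expect to use: it prevents shortcuts between the upper part of $S$ and the vertices near $t$. The second step is to take a shortest $ut$-path $R$ in $G[S\cup P[s,t]]$ and analyze where it crosses between the ``$S$-side'' and the ``$P[s,t]$-side.'' Because $S$ and $P[u,s]$ share the segment from $u$ up to $a$ (both monotone, both starting at $u$), the only vertices common to $S$ and $P[s,t]$ are, essentially, forced to sit at heights between $h(t)$ and $h(s)$, and the sidetrack was chosen precisely so that $S$ beyond $a$ stays ``out of reach'' of the low part $P[s,t]$ near $t$. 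I would use this to show $R$ must pass through a vertex at height $h(s)=h(c)$ — morally, $R$ has to go all the way up to $c$ (or at least to height $h(s)$) on the $S$-side before descending along $P[s,t]$ to $t$ — and then bound $\|R\|\ge h(s)+(h(s)-h(t))$.

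**The main obstacle.**
The delicate part is ruling out a ``cheap crossover'': a priori a shortest $ut$-path could leave $S$ at some intermediate height, hop onto $P[s,t]$, and thereby avoid climbing to $h(s)$. Showing this cannot help requires combining (i) monotonicity of $S$, (ii) the height bounds on $P[s,t]$ from Lemma~\ref{lemma:lemma3}, and (iii) the anticompleteness from Condition~S\ref{S1}, and then checking that even the cheapest legal crossover still forces total length $\ge d_P(u,t)$. Equivalently — and this is probably the cleanest route — I would argue by contradiction: if $d_{G[S\cup P[s,t]]}(u,t)<d_P(u,t)$, then splicing a shortest $ut$-walk in $G[S\cup P[s,t]]$ together with $P[t,v]$ (using that $S$ and $P[t,v]$ interact only through $T$-type anticompleteness, so the splice stays induced or can be shortcut to an induced $uv$-path) yields a $uv$-trail of $G$ strictly shorter than $P$, contradicting minimality of $P$. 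The crux will be verifying that this spliced object really is a $uv$-\emph{trail} (induced, and still non-shortest because its twist region is governed by heights $h(t)\le\cdot\le h(s)$), and that its length is genuinely smaller — this is where Conditions~S\ref{S1} and~S\ref{S2} and Lemma~\ref{lemma:lemma3} all get used together.
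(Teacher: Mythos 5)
Your contradiction-based route is the right opening move and coincides with the paper's, but the proposal stops exactly where the real difficulty begins, and the two concrete claims you lean on are not available. First, the height-counting strand cannot reach the stated bound: $d_P(u,t)=\|P[u,s]\|+\|P[s,t]\|=h(s)+\|P[s,t]\|$, and $\|P[s,t]\|$ may greatly exceed $h(s)-h(t)$, so even forcing a $ut$-path of $G[S\cup P[s,t]]$ up to height $h(s)$ only yields a bound of about $2h(s)-h(t)$, which is in general smaller than $d_P(u,t)$; moreover nothing in Conditions~S\ref{S1} and~S\ref{S2} prevents $S$ from being adjacent to $P[s,t]$ at low heights, so a shortest $ut$-path of $G[S\cup P[s,t]]$ need not climb to height $h(s)$ at all. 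Second, in the splicing route, the anticompleteness you invoke between $S$ and $P[t,v]$ is not given: Condition~S\ref{S1} only guarantees \emph{some} monotone $tv$-path $T$ anticomplete to $S-c$, and $T$ need not be $P[t,v]$. More importantly, even granting the splice, suppose $Q$ is a shortest $ut$-path of $G[S\cup P[s,t]]$ with $\|Q\|<d_P(u,t)$; then a shortest $uv$-path of $G[Q\cup P[t,v]]$ is indeed an induced $uv$-path of $G$ shorter than $P$ --- but this is no contradiction, because $P$ is only a shortest $uv$-\emph{trail}: the induced shortcut of your spliced object can simply collapse to a monotone shortest $uv$-path of $G$. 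Producing a shorter induced $uv$-path that is provably \emph{non-monotone} (hence a trail) is the crux, and your proposal explicitly defers it.

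This is precisely where the paper's proof does its work: it considers both $G[Q\cup T]$ and $G[Q\cup T']$ with $T'=P[t,v]$, observes that their shortest $uv$-paths $R$ and $R'$ must be monotone (else one is already a shorter trail), extracts from $R$ a crossing edge $xy$ of $Q$ with $x\in V(S)$, $y\in V(P[s,t])$ and $h(x)+1=h(y)$, and from $R'$ a crossing edge $x'y'$ with $x'\in V(S[u,x])$ and $y'\in V(T')$, then uses Conditions~S\ref{S1} and~S\ref{S2} together with Lemma~\ref{lemma:lemma3} to force $h(y')\geq h(t)+2$ and a distance bound $d_G(t',y')\geq 2$; from these pieces it assembles an explicit induced $uv$-path $P'$ passing through the height-decreasing edge $yx$, so that $P'$ is non-monotone (hence a $uv$-trail) and strictly shorter than $P$, contradicting the choice of $P$. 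Without an argument of this kind --- exhibiting a shorter induced $uv$-path that provably contains a descending step --- the contradiction you aim for does not materialize, so the proposal has a genuine gap at its central step.
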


\begin{proof}
Let $T$ be a monotone $tv$-path of $G$ with $d_{G[S\cup T]}(u,v)>
\|S\|+\|T\|$ by
\marginppp{p7c3} Condition~S\ref{S1}.  Assume for contradiction
\marginppp{p7c4} that there exists a shortest $ut$-path $Q$ of
$G[S\cup P[s,t]]$ with
\begin{equation}
\label{equation:eq1}
\|Q\|<d_P(u,t),
\end{equation}
implying $\|R\|<\|P\|$ for a shortest $uv$-path $R$ of $G[Q\cup T]$.
Since $R$ is an induced $uv$-path of $G$ shorter than $P$, $R$ is
monotone.  Since Condition~S\ref{S1} implies $t\notin V(S)$,
\marginppp{p7c5,9}
there exists an edge $xy$ of $Q$ with $x\in V(S)$ and~$y\in
V(P[s,t])$ that minimizes~$d_{P[s,t]}(y,t)$.  
\marginppp{p7c6,8} We have $\{x,y\}\subseteq V(R)$ or else $R$
deviates from the induced path $Q$ at a vertex $q$ of $Q[u,x]=S[u,x]$
and enters $T$
\marginppp{p7c7} (by $V(R)\subseteq V(Q\cup T)$) at a vertex $b$ with
height $h(q)+1$ (by monotonicity of $R$), violating
Condition~S\ref{S1}.
\marginppp{p7c10} Since $d_R(u,x)<d_R(u,y)$, we have
\begin{equation}
\label{equation:eq2}
h(x)+1=h(y).
\end{equation}
By Equation~\eqref{equation:eq1}, 
\marginppp{p7c11--13} a shortest $uv$-path $R'$ of $G[Q\cup T']$ with
$T'=P[t,v]$ is monotone. Hence,
\begin{equation}
\label{equation:eq3}
h(x')+1=h(y')
\end{equation}
holds for an edge $x'y'$ of $R'$ with $x'\in V(Q-t)$ and $y'\in
V(T')$,
\marginppp{p7c14--18} implying $x'\notin V(P[s,t])$ by
Lemma~\ref{lemma:lemma3}.  Thus, $x'\in V(S[u,x])$.
Condition~S\ref{S1} implies $y'\ne t$.  Condition~S\ref{S2} and
Equation~\eqref{equation:eq3} imply $h(y')\ne h(t)+1$.  Therefore, we
have $h(y')\geq h(t)+2$, implying $h(x)\geq h(t)+1$ (by
Equation~\eqref{equation:eq3}) and
\marginppp{p7c19}
\begin{equation}
\label{equation:eq4}
d_G(t',y')\geq 2
\end{equation}
for the vertex~$t'\in V(P[y,t])$ with~$h(t')=h(t)$ that minimizes $d_P(t',y)$.
\marginppp{p7c20,21} Since $h(x)\geq h(t)+1$ and by choices of $y$ and
$t'$, the concatenation $Q'$ of
\begin{enumerate}[leftmargin=*, label={(\roman*)}]
\item  
a shortest~$ut'$-path of~$G$ and
\item 
$P[t',y]$
\end{enumerate}
is an induced $uy$-path of $G$ with $\|Q'\|\leq
\|P[u,a]\|+\|P[s,t]\|$.  By
Equations~\eqref{equation:eq2},~\eqref{equation:eq3},
and~\eqref{equation:eq4}, a shortest $xv$-path $Q''$ of $G[S[x,x']\cup
  P[y',v]]$ satisfies $\|Q''\|\leq \|P[a,s]\|+\|P[t,v]\|-2$. Thus,
$P'=Q'\cup yx\cup Q''$
\marginppp{p7c22} is a $uv$-path of $G$ with $\|P'\|<\|P\|$.
\marginppp{p8c1} By definitions of the $ut$-path $Q$ and the edge $xy$
of $Q$, $Q'$ is anticomplete to $S[x',x]-x$ in $G$.  By
Equation~\eqref{equation:eq4}, $Q'$ is anticomplete to $P[y',v]$ in
$G$. Hence, $P'$ is an induced $uv$-path of $G$.  By
Equation~\eqref{equation:eq2} and $d_{P'}(u,x)>d_{P'}(u,y)$, $P'$ is
a~$uv$-trail of $G$ shorter than $P$, contradiction.
\end{proof}

We are ready to describe and justify an $O(n^6)$-time algorithm that
either reports a~$uv$-trail of $G$ or ensures that~$G$
is~$uv$-trailless. Let $G$ be connected without loss of generality.

\paragraph*{Our $\boldsymbol{O(n^6)}$-time algorithm}
Let $G_0$ be the input $n$-vertex graph.  Apply
Lemma~\ref{lemma:lemma1} in $O(n^3)$ time to either report a
$uv$-trail of $G_0$ as stated in Lemma~\ref{lemma:lemma1}(1) or obtain
a~$uv$-straight graph $G$
\marginppp{p8c2} satisfying Conditions~(a) and~(b) of
Lemma~\ref{lemma:lemma1}(2). If no~$uv$-trail is reported in the
previous step, then apply Lemma~\ref{lemma:lemma2} to obtain the data
structure $D$ for the winged quadruples of $G$ in~$O(n^6)$ time.  By
Conditions~(a) and~(b) of
\marginppp{p8c3} Lemma~\ref{lemma:lemma1}(2), it remains to show
an~$O(n^2)$-time degree-$4$~$uv$-trailblazer for the $uv$-straight
graph $G$, which immediately leads to an $O(n^6)$-time trailblazing
algorithm that either reports a $uv$-trail of $G$ or ensures that $G$
is $uv$-trailless.

Let $B$ be the following $O(n^2)$-time subroutine, taking a
quadruple~$(a,b,c,d)$ of $V(G)$ as the argument: Determine in $O(1)$
time from the data structure $D$ whether $(a,b,c,d)$ is winged in $G$.
If not, then exit. Otherwise, obtain in $O(n)$ time from $D$ a
pair~$(W_1,W_2)$ of wings for~$(a,b,c,d)$ in $G$.  Since $G$
is~$uv$-straight,
\marginppp{p8c4} each monotone $xy$-path $Q$ of $G$ with $h(x)\leq
h(y)$ is contained by the monotone $uv$-path $P\cup Q\cup R$, where
$P$ is an arbitrary monotone $ux$-path of $G$ and $R$ is an arbitrary
monotone $yv$-path of $G$.  Thus, it takes~$O(n^2)$ time to obtain a
monotone~$uc$-path~$S$ of $G$ containing $W_1$ and a monotone
$bv$-path~$T$ of~$G$ containing~$W_2$.  Obtain in~$O(n^2)$ time the
subgraph $G_{c,b}$ of $G$ induced by
\[
\{x\in V(G):h(b)\leq h(x)\leq h(c)\}\setminus ((N_G[S-c]\cup
N_G[T-b])\setminus\{c,b\}).
\]
If $c$ and $b$ are not connected in $G_{c,b}$, then exit.  Otherwise,
report the concatenation $P_{c,b}$ of~(i) the~$uc$-path~$S$, (ii) a
shortest $cb$-path of $G_{c,b}$, and (iii) the $bv$-path $T$.

By definition of $S$, $T$, and $G_{c,b}$, the $uv$-path $P_{c,b}$ of
$G$ reported by $B(a,b,c,d)$ is induced in~$G$, which is not monotone
by $h(b)\leq h(c)$.  Thus, $P_{c,b}$ is a $uv$-trail of~$G$.  

Let $P$ be an arbitrary unknown shortest $uv$-trail of~$G$ with twist
pair $(s,t)$.  By Lemma~\ref{lemma:lemma3}, we have $h(t)\leq h(s)$.
Let~$a$ (respectively,~$d$) be the vertex of the monotone $P[u,s]$
(respectively,~$P[t,v]$) with $h(a)=h(t)$ (respectively,~$h(d)=h(s)$),
whose existence and uniqueness
\marginppp{p9c1} are due to the fact that $P[u,s]$ and $P[t,v]$ are
monotone.  See Figure~\ref{figure:figure4} for an illustration.  The
rest of the section shows that $(a,t,s,d)$ is a trail marker for~$B$.

\begin{figure}
\centering
\includegraphics[width=.7\textwidth]{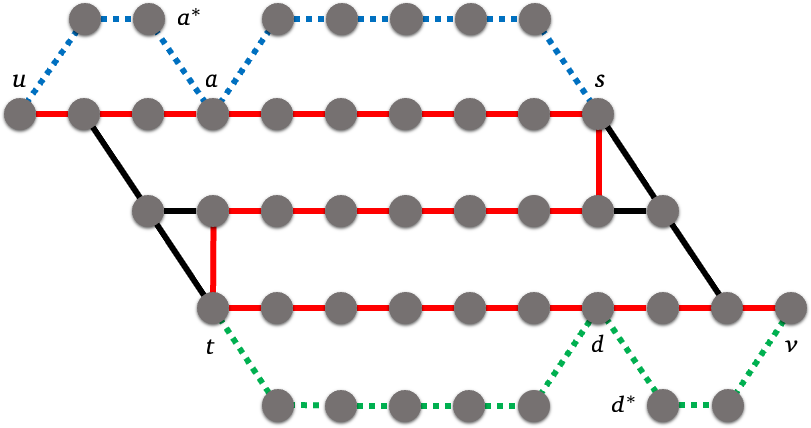}
\caption{An illustration for the proof that $B$ is a $uv$-trailblazer
  of degree four. The red path denotes a shortest $uv$-trail of the
  $uv$-straight graph $G$.  The blue and green dotted paths denote a
  monotone~$us$-path and a monotone $tv$-path of $G$ containing a
  precomputed pair of wings for~$(a,t,s,d)$ that need not coincide
  with $P$ except at $a$, $t$, $s$, and $d$.}
\label{figure:figure4}
\end{figure}

Let $a^*$ be the neighbor of $a$ in $P[u,a]$ 
\marginppp{p9c2} and $d^*$ be the neighbor of $d$ in $P[d,v]$.
$P[a^*,s]$ and~$P[t,d^*]$ form a pair of wings for~$(a,t,s,d)$ in
$G$. Thus, the quadruple $(a,t,s,d)$ is winged in $G$.  Let
$(W_1,W_2)$
\marginppp{p9c3} be a pair of wings for $(a,t,s,d)$.  The
monotone~$us$-path~$S$ of $G$ containing~$W_1$ is a sidetrack for $P$,
since the monotone $tv$-path $T$ of~$G$ containing $W_2$ satisfies
Conditions~S\ref{S1} and~S\ref{S2} for~$S$.  Due to the symmetry
\marginppp{p9c4} between~$u$ and $v$ in the undirected $uv$-straight
graph $G$ with respect to the height function that maps each vertex to
its distance to $v$ in $G$, the monotone $vt$-path~$T$ of the
$vu$-straight graph $G$ is also a sidetrack for the shortest
$vu$-trail $P$ of $G$ with twist pair~$(t,s)$, since the
monotone~$su$-path~$S$ of $G$ satisfies Conditions~S\ref{S1}
and~S\ref{S2} for $T$.  Lemma~\ref{lemma:lemma3} guarantees $h(t)\leq
h(x)\leq h(s)$ for each vertex $x$ of $P[s,t]$.  By
Lemma~\ref{lemma:lemma4},~$P[s,t]-\{s,t\}$ is anticomplete to both
$S-s$ and $T-t$, implying that~$P[s,t]$ is a path of $G_{s,t}$.
Since~$s$ and $t$ are connected in~$G_{s,t}$, the subroutine
call~$B(a,t,s,d)$ outputs a~$uv$-trail~$P_{s,t}$ of $G$ in~$O(n^2)$
time. Hence,~$(a,t,s,d)$ is indeed a trail marker of~$B$.

As a matter of fact, $P_{s,t}$ is a shortest $uv$-trail of $G$ due to the 
\marginppp{p9c5} fact that $\|P_{s,t}\|=\|P\|$. Since the
preprocessing and postprocessing may ruin the shortestness of the
reported $uv$-trail, we have an $O(n^6)$-time algorithm on
an~$n$-vertex general (respectively, $uv$-straight) graph $G$ that
either reports a general (respectively, shortest) $uv$-trail of $G$ or
ensures that $G$ is $uv$-trailless.

\section[]{An $\boldsymbol{O(n^{4.75})}$-time algorithm}
\label{section:section3}
This section gives a self-contained proof of
Theorem~\ref{theorem:theorem1}.  The \emph{product} of $m\times m$
Boolean matrices $A$ and~$B$ is the $m\times m$ Boolean matrix $C$
such that $C(i,k)=\mathit{true}$ if and only
if~$A(i,j)=B(j,k)=\mathit{true}$ holds for an index~$j$.  The
following lemma implies and improves upon Lemma~\ref{lemma:lemma1},
which states
\marginppp{p9c6} that it takes $O(n^3)$ time to obtain a $uv$-trail of
$G$ from a $uv$-trail of $H$.

\begin{lemma}
\label{lemma:lemma5}
For any vertices $u$ and $v$ of an $n$-vertex connected graph $G$, it
takes $O(n^\omega)$ time to obtain~(1) a~$uv$-trail of $G$ or (2) a
$uv$-straight graph $H$ with $V(H)\subseteq V(G)$ such that (a) a
$uv$-trail of $G$ can be obtained from a~$uv$-trail of $H$ in $O(n^2)$
time and (b) if $H$ is~$uv$-trailless, then so is $G$.
\end{lemma}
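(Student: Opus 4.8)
The plan is to re-derive the reduction behind Lemma~\ref{lemma:lemma1}, keeping its combinatorial content intact while replacing its cubic bottleneck with a bounded number of $n\times n$ Boolean matrix multiplications. I would start from the same structural dichotomy used by Berger et al.: either $G$ fails to be $uv$-straight in a way that exhibits a $uv$-trail outright, or the vertices lying on no shortest $uv$-path can be stripped away---with a local rerouting of any trail that passes through them---to leave a $uv$-straight induced subgraph $H$ satisfying (a) and (b). These structural facts I would reprove by adapting the argument of Berger et al., since on that side nothing changes; all the work is in how fast the stripping, the detection of the trail-exhibiting case, and the rerouting can be carried out.

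First I would run breadth-first search from $u$ and from $v$, obtaining in $O(n^2)$ time the height $h(x)=d_G(u,x)$ and the co-height $d_G(x,v)$ of every vertex $x$, hence the set of vertices on some shortest $uv$-path and the layered structure in which each edge joins vertices whose heights differ by at most one. The key step is to observe that the remaining information the reduction consults---short-range distances (up to a fixed constant) between pairs of vertices of nearly equal height, equivalently whether certain pairs share a neighbour or a length-two connection inside a height-restricted subgraph, on which both the trail-exhibiting test and the induced-ness of a rerouted path hinge---can be read off from $O(1)$ products of $n\times n$ Boolean matrices built from the adjacency matrix of $G$ and its layer restrictions; each product costs $O(n^\omega)$, so this step is $O(n^\omega)$. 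Given this data, deciding which of cases (1), (2) holds, and in case (1) outputting a $uv$-trail of $G$, is $O(n^2)$-time combinatorial glue, and the $O(n^2)$-time recovery in (a) is the same local surgery---splicing a shortest subpath in place of the portion of an $H$-trail that meets $V(G)\setminus V(H)$---used by Berger et al.

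The step I expect to be the main obstacle is pinning down exactly which reachability and short-distance information is needed and proving it suffices: Berger et al.'s procedure reads as a sequence of breadth-first searches whose inputs depend on earlier outputs, and collapsing it into a constant number of matrix products requires showing that, once all relevant distances are frozen at their values in $G$ and computed once up front, no essential sequential dependence survives, so the auxiliary tables can be assembled in a single batch. A secondary subtlety is keeping the bound a clean $O(n^\omega)$ rather than $\tilde O(n^\omega)$: this forces the batch to use only a bounded number of Boolean multiplications, i.e.\ only bounded-range distance information, so the structural analysis must be arranged so that no full all-pairs shortest-path computation is ever needed. Properties (a) and (b) and the $O(n^2)$-time trail recovery then follow essentially verbatim from Berger et al.
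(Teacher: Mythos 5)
Your high-level plan (reuse Berger et al.'s reduction and replace its cubic step by Boolean matrix products) matches the paper's intent, but the two concrete claims you build on do not hold, and they are exactly where the content of Lemma~\ref{lemma:lemma5} lies. First, $H$ cannot be an induced subgraph of $G$ obtained by ``stripping away'' the vertices lying on no shortest $uv$-path: condition (b) then fails. Take two internally disjoint shortest $uv$-paths $u\,a_1\,a_2\,a_3\,v$ and $u\,b_1\,b_2\,b_3\,v$ with no cross edges, plus one extra vertex $k$ adjacent exactly to $a_2$ and $b_2$. Then $u\,a_1\,a_2\,k\,b_2\,b_3\,v$ is a $uv$-trail of $G$, while the stripped graph is a cycle of length eight and is $uv$-trailless. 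The construction actually needed (Berger et al., and the paper's proof) takes a maximal $F$ with $G[F]$ $uv$-straight and sets $H=G[F]\cup H'$, where $H'$ adds an edge $xy$ whenever $x,y\in F$ are both adjacent to a common connected component $K$ of $G-F$; these added edges, which are not edges of $G$, are what make both (a) and (b) work (replace an added edge by a path through $K$, and conversely contract each excursion of a $G$-trail through $K$ to an added edge). Your proposal never mentions the components of $G-F$ or these extra edges, so the ``local rerouting'' you invoke is not available, and deferring to Berger et al.\ does not rescue it because your summary of their reduction is not what they prove.

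Second, the matrix-multiplication step is mis-aimed. The predicate that must be computed fast is ``do $x,y\in F$ see a common component of $G-F$'' (to build $H'$, and, restricted to pairs with $xy\notin E(G)$ and $h(x)\ne h(y)$, to detect case (1) and output a trail through that component). A component of $G-F$ can be large and span many height layers, so this is not ``short-range distance up to a fixed constant between pairs of nearly equal height'' and cannot be read off from $O(1)$ products of layer restrictions of the adjacency matrix of $G$. The paper's implementation is different: compute the connected components of $G-F$ in $O(n^2)$ time, contract each to a single vertex, and obtain $H'$ by one Boolean squaring of the contracted adjacency matrix; the case-(1) test is a triangle detection (again one Boolean product) in the auxiliary graph on $F$ together with the contracted components, where two $F$-vertices are joined exactly when they are nonadjacent in $G$ and have different heights. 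So an $O(n^\omega)$ batch is indeed possible, but via contraction rather than layer-restricted constant-radius products; the step you yourself flag as the main obstacle---identifying which reachability information suffices---is precisely the part your argument does not supply.
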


\begin{proof}
We adopt the proof of Berger et al.~\cite[Lemma~2.2]{BergerSS21-dm}
with slight simplification and improvement.  It takes $O(n^2)$ time
via, e.\,g.,
\marginppp{p10c1,2} breadth-first search to obtain a maximal set
$F\subseteq V(G)$ such that $G[F]$ is~$uv$-straight.  If $F=V(G)$,
then the lemma is proved by returning $H=G$.  The rest of the proof
assumes $F\subsetneq V(G)$.  It takes~$O(n^\omega)$ time to determine
whether some connected component~$K$ of $G-F$ admits an
$\{x,y\}\subseteq N_G(K)$
\marginppp{p10c6} with $xy\notin E(G)$ and $h(x)<h(y)$ via finding a
triangle intersecting exactly two vertices of $F$ in the following
graph $A$
\marginppp{p10c3,4} such that each vertex of $A$ corresponds to a
vertex $x$ in $F$ or a connected component $K$ of $G-F$:
\begin{itemize}
\item A vertex $x$ in $F$ is adjacent to a connected component $K$ of
  $G-F$ if and only if $x\in N_G(K)$.
\item Distinct connected components of $G-F$ are nonadjacent in $A$.
\item Distinct vertices $x$ and $y$ of $F$ are adjacent in $A$ if and
  only if $xy\notin E(G)$ and $h(x)\ne h(y)$.
\end{itemize}
If there is such a $(K,x,y)$, then a shortest $uv$-path of $G[P_x\cup
  K\cup P_y]$ for any monotone~$ux$-path~$P_x$ and $yv$-path $P_y$ of
$G$ is a $uv$-trail of~$G$ (by the maximality of
\marginppp{p10c5} $F$) obtainable in~$O(n^2)$ time, proving the lemma.
Otherwise, let $H$ be the union of the~$uv$-straight graph~$G[F]$ and
the $O(n^\omega)$-time obtainable graph $H'$ with $V(H')=F$ (via
contracting each connected component of $G-F$ into a single vertex and
then squaring the adjacency matrix) such that distinct vertices $x$
and~$y$ are adjacent in~$H'$ if and only if $\{x,y\}\subseteq N_G(K)$
holds for a connected component $K$ of~$G-F$. Observe that each
edge~$xy$ of~$H'$ with~$h(x)\ne h(y)$ is also an edge of $G[F]$.
By~$|h(x)-h(y)| \leq 1$ for all edges~$xy$ of $H'$, $H$
remains~$uv$-straight and~$d_H(u,x)=h(x)$ holds for each $x\in F$.  To
see Condition~(a), for any given $uv$-trail $Q$ of~$H$, let~$P$ be an
$O(n^2)$-time obtainable non-monotone $uv$-path of~$G$ obtained from
$Q$ by replacing each edge~$xy$ of~$Q$ not in $G[F]$ with a shortest
$xy$-path~$P_{xy}$ of $G-(F\setminus \{x,y\})$.  If $P$ were not
induced, then there is an edge~$zz'$ of $G[P]$ not in $P$ with~$z\in
V(P_{xy})$ and $z'\in V(P_{x'y'})$ for distinct edges $xy$ and $x'y'$
of~$Q$ that are not in $G[F]$.  Thus, $\{x,y,x',y'\}\subseteq N_G(K)$
holds for some connected component~$K$ of~$G-F$.  By definition
of~$H'$, $H[\{x,y,x',y'\}]$ is complete, contradicting that $Q$ is an
induced path of $H$. Thus, $P$ is a $uv$-trail of $G$, proving
Condition~(a).  As for Condition~(b), let~$P$ be a $uv$-trail of
$G$. For any distinct vertices $x$ and~$y$ of $P$ such that~$P[x,y]$
is a maximal subpath of $P$ contained by~$G[\{x,y\}\cup K]$ for some
connected component~$K$ of~$G-F$, $P[x,y]$ is an induced $xy$-path of
$G[\{x,y\}\cup K]$. The path $Q$ obtained from~$P$ by replacing each
such~$P[x,y]$ by the edge $xy$ of $H'$ is an induced $uv$-path of
$H$. If $Q$ were a shortest~$uv$-path of $H$, then~$|h(x)-h(y)|=1$
holds for each edge $xy$ of $Q$, implying that each edge $xy$ of $Q$
is an edge of $P$, contradicting that $P$ is a $uv$-trail of $G$.
\end{proof}

The bottleneck of our algorithm for Theorem~\ref{theorem:theorem1}
comes from the following lemma, which implies and improves upon
Lemma~\ref{lemma:lemma2} that takes $O(n^6)$ time.

\begin{lemma}
\label{lemma:lemma6}
It takes $\tilde{O}(n^{2\omega})$ time to compute a data structure
from which the following statements hold for any quadruple $(a,b,c,d)$
of $V(G)$ for an $n$-vertex graph $G$:
\begin{enumerate}
\item It takes $O(1)$ time to determine whether $(a,b,c,d)$ is winged
  in $G$.
\item If $(a,b,c,d)$ is winged in $G$, then it takes $O(n)$ time to
  obtain a pair of wings for $(a,b,c,d)$ in $G$.
\end{enumerate}
\end{lemma}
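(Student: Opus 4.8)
The plan is to reduce the computation of all winged quadruples of $G$ to a single Boolean matrix product on matrices indexed by ordered \emph{pairs} of vertices, together with the witness-matrix machinery of Galil and Margalit~\cite{GalilM93} to make the pairs of wings themselves recoverable in $O(n)$ time. Recall that $(W_1,W_2)$ is a pair of wings for $(a,b,c,d)$ precisely when $W_1$ is a monotone $ca^*$-path containing $a$, $W_2$ is a monotone $bd^*$-path containing $d$, the heights satisfy $h(a^*)+1=h(a)=h(b)\le h(c)=h(d)=h(d^*)-1$, and $W_1-c$ is anticomplete to $W_2$ while $W_1$ is anticomplete to $W_2-b$; equivalently $d_{G[W_1\cup W_2]}(a^*,d^*)>\|W_1\|+\|W_2\|$. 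First I would precompute $h(\cdot)=d_G(u,\cdot)$ in $O(n^2)$ time, discard quadruples violating the height constraints, and observe that $a^*,b,a,d,c,d^*$ are essentially determined by $(a,c)$ and $(b,d)$ via their prescribed heights (with $b$ free at height $h(a)$ and $d$ free at height $h(c)$). So it suffices to decide, for each ordered pair of vertices $(x,y)$ with $h(x)<h(y)$, whether there is a monotone $xy$-path, and then — the crux — to ``glue'' a monotone path ending near $a$ to a monotone path ending near $d$ while certifying the anticomplete condition.

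The key idea is the standard closure-by-squaring trick, but lifted to pairs. Build an auxiliary graph $\mathcal{G}$ on vertex set $V(G)\times V(G)$ (ordered pairs), $n^2$ vertices in all. Put an edge between $(p,q)$ and $(p',q')$ when, roughly, $p'$ is the height-$(h(p){+}1)$ successor of $p$ along some monotone path and $q'$ is the height-$(h(q){-}1)$ predecessor of $q$, subject to the constraint that the newly added vertices are nonadjacent to the opposite partial path — this is where ``anticomplete'' gets enforced incrementally, edge by edge. Then a walk in $\mathcal{G}$ of the appropriate length from $(a,d)$-type states encodes exactly a pair of wings, and reachability in $\mathcal{G}$ is computed by $O(\log n)$ squarings of an $n^2\times n^2$ Boolean matrix, i.e. $\tilde O(n^{2\omega})$ time. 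For statement~(2), I would run these squarings with Galil--Margalit witness matrices, so that each ``true'' entry of a product comes with a witness index; then, given a winged quadruple, one descends through the $O(\log n)$ levels of witnesses, at each level charging $O(1)$ work per recovered pair-vertex and $O(n)$ pair-vertices total, to reconstruct $W_1$ and $W_2$ explicitly in $O(n)$ time. A small final step handles the free vertices $b$ and $d$ by also indexing, for each relevant $(a^*,c)$ and $(d^*,\cdot)$, which heights-$h(a)$ and heights-$h(c)$ vertices can serve, so the $O(1)$ query in~(1) is a single table lookup.

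The main obstacle — and the place I expect the real work to be — is correctly formalizing the adjacency of $\mathcal{G}$ so that (i) anticompleteness is a \emph{local} condition checkable from the two pair-endpoints alone, and (ii) a walk in $\mathcal{G}$ actually yields two \emph{vertex-disjoint} (not merely distinct-edge) monotone paths with the global anticomplete property, rather than only pairwise-consecutive nonadjacency. The disjointness and global-anticompleteness issues are exactly the kind of subtlety that monotonicity is supposed to kill: because all vertices of $W_1$ have distinct heights and likewise for $W_2$, a vertex of $W_1$ at height $\ell$ can only possibly be adjacent to the vertices of $W_2$ at heights $\ell-1,\ell,\ell+1$, so ``anticomplete'' collapses to finitely many local checks per height level — but making this precise, and making sure the state of $\mathcal{G}$ carries enough information (probably the \emph{current endpoints at the current height} of both partial paths) to perform those checks, is the delicate part. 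Once the state space and transitions are pinned down, the matrix-multiplication and witness-recovery parts are routine. I would also double-check the height bookkeeping at the two ends ($a^*$ versus $a$, $d$ versus $d^*$) so that the recovered $W_1,W_2$ have exactly the endpoints the definition of wings demands, and note that the $\tilde O(n^{2\omega})$ preprocessing dominates, giving the claimed bound.
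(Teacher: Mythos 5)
Your proposal is essentially the paper's own proof: the paper forms the $n^2\times n^2$ Boolean matrix indexed by vertex pairs whose true entries are the quadruples $(a,b,c,d)$ with $h(a)=h(b)\le h(c)=h(d)\le h(a)+1$ admitting anticomplete wings, takes its transitive closure by $O(\log n)$ squarings, and recovers an explicit pair of wings in $O(n)$ time via Galil--Margalit witness matrices --- exactly your pair-graph reachability plus witness recovery. The ``delicate part'' you flag is resolved just as you anticipate: advancing both partial wings level-synchronously and defining the base entries as one-height-level anticomplete wing pairs suffices, since monotonicity confines all cross-adjacencies to consecutive height levels, so the local checks compose into global anticompleteness.
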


\begin{proof}
The lemma holds clearly for the quadruples $(a,b,c,d)$ of $V(G)$
with~$h(c)\leq h(a)+1$.  The rest of the proof handles those with
$h(a)+2\leq h(c)$.  A pair of wings for such an $(a,b,c,d)$ must be
anticomplete in $G$.  It takes $O(n^4)$ time to obtain the $n^2\times
n^2$ Boolean matrix $A$ such that~$A((a,b),(c,d))=\mathit{true}$ if
and only if~(i)~$h(a)=h(b)\leq h(c)=h(d)\leq h(a)+1$ and~(ii) $G$
admits a pair of anticomplete wings for~$(a,b,c,d)$.  The transitive
closure $C=A^n$ of $A$ can be obtained in~$O(n^{2\omega}\cdot\log n)$
time via obtaining~$A^{2^i}$ in the $i$-th iteration.  That is, for
each $(a,b,c,d)$, we have~$C((a,b),(c,d))=\mathit{true}$ if and only
if (i) $h(a)=h(b)\leq h(c)=h(d)$ and~(ii)~$G$ admits a pair of
anticomplete wings for~$(a,b,c,d)$ in $G$.  Statement~1 is proved.
Statement~2 is immediate from the~$\tilde{O}(n^{2\omega})$-time
obtainable~$n^2\times n^2$ witness matrix~$W$ for~$C$ by,~e.\,g.,
Galil and Margalit~\cite{GalilM93}: if~$C((a,b),(c,d))=\mathit{true}$
and $h(a)+2\leq h(c)$, then~$W((a,b),(c,d))$ is a vertex pair~$(x,y)$
with $h(a)<h(x)<h(c)$ and
$C((a,b),(x,y))=C((x,y),(c,d))=\mathit{true}$.
\end{proof}
The next dynamic data structure for a graph supports efficient
edge updates and connectivity queries.

\begin{lemma}[{Holm, de~Lichtenberg, and Thorup~\cite{HolmdT01}}]
\label{lemma:lemma7}
There is a data structure for an initially empty $n$-vertex graph that
supports each edge insertion and edge deletion in amortized~$O(\log^2
n)$ time and answers whether two vertices are connected in $O(\log
n/\log\log n)$ time.
\end{lemma}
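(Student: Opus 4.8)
The plan is to reconstruct the Holm--de~Lichtenberg--Thorup deletions-only level structure on top of a dynamic-forest primitive. First I would maintain a spanning forest $F$ of the current graph, represented by Euler-tour trees: each tree of $F$ is stored as a balanced search tree over one of its Euler tours, so that inserting or deleting a forest edge (link/cut) and testing whether two vertices lie in the same tree each take $O(\log n)$ time. To obtain the sharper $O(\log n/\log\log n)$ query time I would use balanced search trees of branching factor $\Theta(\log n)$ rather than binary ones, which brings the root-to-leaf depth down to $O(\log n/\log\log n)$ while keeping updates within the claimed bound.

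The heart of the construction is a level function $\ell(e)\in\{0,1,\dots,\lfloor\log_2 n\rfloor\}$ on the edges, with $G_i$ the subgraph of edges of level at least $i$ and $F_i=F\cap G_i$. I would maintain two invariants: (I1) $F_i$ is a spanning forest of $G_i$ for every $i$ (so $F=F_0$ spans $G$, and $F_i$ is $F$ restricted to its high-level tree edges), and (I2) every tree of $F_i$ has at most $n/2^i$ vertices; levels never decrease. An insertion puts the new edge at level $0$ and links $F$ if its endpoints were disconnected; both invariants survive and the cost is $O(\log n)$. Deleting a non-tree edge needs nothing beyond removing it. Deleting a tree edge $e$ of level $\ell$ splits some tree $T$ of $F_\ell$ into $T_v$ and $T_w$; I would pick the smaller side, say $T_v$ with $|T_v|\le n/2^{\ell+1}$, promote every level-$\ell$ tree edge inside $T_v$ to level $\ell+1$ (this keeps (I2) at level $\ell+1$, since those edges now lie inside a tree of $F_{\ell+1}$ of size at most $|T_v|$), and then scan the level-$\ell$ non-tree edges with an endpoint in $T_v$: the first such edge whose other endpoint is in $T_w$ is a replacement, which I add to $F$ and stop; every scanned non-tree edge lying entirely inside $T_v$ gets promoted to level $\ell+1$; if no replacement turns up at level $\ell$, I recurse at level $\ell-1$. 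Augmenting each Euler-tour tree with, per level $i$, a subtree bit recording whether it contains an endpoint of a level-$i$ non-tree edge (and whether it contains a level-$i$ tree edge) lets each ``next candidate edge'' be located in $O(\log n)$ time.

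For the time bound I would use the standard potential/charging argument. A single promotion of one edge triggers $O(\log n)$ work in the Euler-tour trees, and since an edge's level only increases and is bounded by $\lfloor\log_2 n\rfloor$, over any sequence of updates the total promotion work is $O(\log^2 n)$ amortized per operation. The only edges examined during a deletion that are not promoted are the replacement edges actually used, one per recursion level, so $O(\log n)$ of them at $O(\log n)$ each, contributing a further $O(\log^2 n)$ charged directly to that deletion. A connectivity query just compares the roots (representatives) of the Euler-tour trees of the two queried vertices, which costs $O(\log n/\log\log n)$ with the high-branching trees.

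The main obstacle is not any single step but the bookkeeping that makes the amortized bound honest: one must check that (I1) and (I2) survive every promotion and every forest link performed during a deletion search as the recursion descends through the levels (the size bound $|T_v|\le n/2^{\ell+1}$ is exactly what licenses moving the level-$\ell$ edges of $T_v$ up a level), and one must set up the potential so that promotions are paid for in advance while replacement edges are paid for by the current deletion. The second delicate point is getting the augmented Euler-tour tree to report ``is there a level-$i$ non-tree edge leaving this tree, and if so where'' in $O(\log n)$ per probe without inflating the $O(\log^2 n)$ update cost.
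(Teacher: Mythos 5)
Your proposal is a correct outline of the Holm--de~Lichtenberg--Thorup construction, but note that the paper does not prove Lemma~\ref{lemma:lemma7} at all: it is imported verbatim as a black box from~\cite{HolmdT01}, which is the appropriate treatment here since the lemma is only used as an off-the-shelf tool in Phase~2 of \S\ref{section:section3}. Compared with the cited source, your sketch reproduces the essential ingredients faithfully: Euler-tour trees over a spanning forest, the edge-level function with the two invariants ($F_i$ spans $G_i$, components of $F_i$ have at most $n/2^i$ vertices), promotion of the smaller side $T_v$ with $|T_v|\leq n/2^{\ell+1}$, the replacement-edge search descending through levels, the potential argument charging each edge $O(\log n)$ promotions at $O(\log n)$ each, and the branching-factor-$\Theta(\log n)$ trees for the $O(\log n/\log\log n)$ query time. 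A few implementation details are glossed over but are standard and do not endanger the bound: each level needs its own (augmented) Euler-tour forest, the deleted tree edge must be cut from every forest $F_j$ with $j\leq\ell(e)$ and the replacement edge, once found at level $i$, must be linked into all forests $F_j$ with $j\leq i$ (both $O(\log^2 n)$ per deletion); also, with branching factor $\Theta(\log n)$ each tree operation costs $O(\log^2 n/\log\log n)$ rather than $O(\log n)$, which still fits inside the claimed amortized $O(\log^2 n)$ update time. So your reconstruction is sound; for the purposes of this paper, citing~\cite{HolmdT01} suffices and no proof is expected.
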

We are ready to prove Theorem~\ref{theorem:theorem1}. Assume without
loss of generality that $G$ is connected.

\paragraph*{Our $\boldsymbol{O(n^{4.75})}$-time algorithm}

Apply Lemma~\ref{lemma:lemma5} in $O(n^\omega)$ time to either report
a $uv$-trail of~$G$ as in Lemma~\ref{lemma:lemma5}(1) or make~$G$
a~$uv$-straight graph satisfying Conditions~(a) and~(b) of
Lemma~\ref{lemma:lemma5}(2) with respect to the original $G$.  If no
$uv$-trail is reported in the previous step, then apply
Lemma~\ref{lemma:lemma6} in~$\tilde{O}(n^{2\omega})$ time to obtain
the data structure~$D$ for the winged quadruples of $V(G)$ in $G$.  It
remains to show an~$O(n^2\log^2 n)$-time degree-two $uv$-trailblazer
for the $uv$-straight graph $G$ based on the precomputed $D$ which
already spends $O(n^{4.75})$ time.  We proceed in two phases. Phase~1
shows that we already have an $O(n^3)$-time degree-two
$uv$-trailblazer for $G$.  Phase~2 then reduces the time to
$O(n^2\log^2 n)$ via Lemma~\ref{lemma:lemma7}.

\subparagraph*{Phase~1} 
Let $B_1$ be the $O(n^3)$-time subroutine, taking a pair~$(a,b)$ of
$V(G)$ as the only argument, that runs the following $O(n^2)$-time
procedure for each vertex $c$ of $G$: Determine from $D$ in $O(n)$
time whether~$G$ admits a winged quadruple $(a,b,c,d_c)$ of $V(G)$ for
some~$d_c$. If not, then exit.  Otherwise, obtain from~$D$ in~$O(n)$
time a pair~$(W_1,W_2)$ of wings for an arbitrary winged
$(a,b,c,d_c)$.  Since $G$ is~$uv$-straight, it takes~$O(n^2)$ time to
obtain a monotone~$uc$-path $S_c$ of~$G$ containing $W_1$ and a
monotone~$bv$-path~$T_c$ of~$G$ containing~$W_2$.  Obtain in~$O(n^2)$
time the subgraph $G_c$ of $G$ induced by
\[
(\{x\in V(G):h(b)\leq h(x)\leq h(c)\}\setminus (N_G[S_c-c]\setminus
\{c\}))\cup V(T_c).
\]
If the vertices $c$ and $b$ are not connected in $G_c$, then exit.
Otherwise, report the $O(n^2)$-time obtainable concatenation $P_c$ of
the $uc$-path $S_c$ of $G$ and a shortest $cv$-path of $G_c$.

By definition of $S_c$, $T_c$, and $G_c$, the $uv$-path $P_c$ of $G$
reported by $B_1(a,b)$ for any $c$ is induced in $G$.  Since the
height of each neighbor of $c$ in $G_c$ is at most $h(c)$, $P_c$ is
not monotone.  Thus, $P_c$ is a $uv$-trail of~$G$.  Let~$P$ be an
arbitrary unknown shortest $uv$-trail of~$G$ with twist pair~$(s,t)$.
Let~$a$ (respectively,~$e$) be the vertex of the monotone $P[u,s]$
(respectively,~$P[t,v]$) with $h(a)=h(t)$ (respectively, $h(e)=h(s)$).
See Figure~\ref{figure:figure5} for an illustration.  To ensure
that~$B_1$ is an $O(n^3)$-time $uv$-trailblazer of degree $2$ for~$G$,
the rest of the phase proves that~$(a,t)$ is a trail marker for $B_1$
by showing that the iteration with $c=s$ reports a~$uv$-trail~$P_s$ of
$G$.

\begin{figure}
\centering
\includegraphics[width=.7\textwidth]{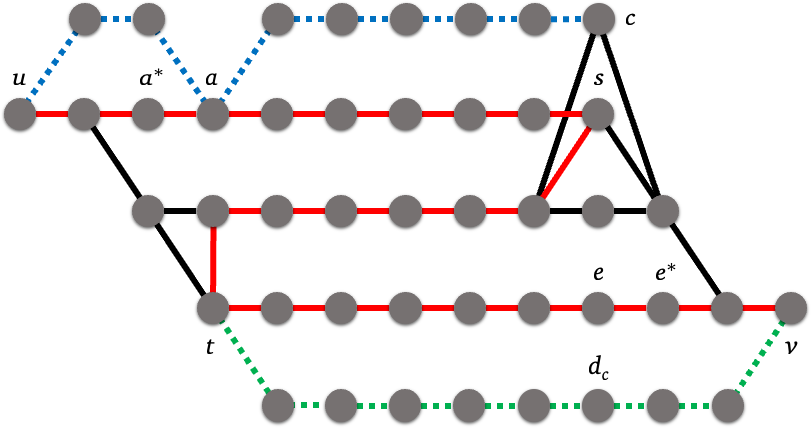}
\caption{An illustration for the proof that $B_1$ is a
  $uv$-trailblazer of degree two.  The red path denotes a shortest
  $uv$-trail $P$ of the $uv$-straight graph $G$.  The blue and green
  dotted paths denote a monotone~$uc$-path $S_c$ and a monotone
  $tv$-path $T_c$ of $G$ containing a precomputed pair of wings
  for~$(a,t,c,d_c)$ that need not coincide with $P$ except at $a$ and
  $t$.}
\label{figure:figure5}
\end{figure}

Let $a^*$ be the neighbor of $a$ in the monotone $P[u,a]$, implying
$h(a^*)=h(t)-1$.  Let~$e^*$ be the neighbor of $e$ in the
monotone~$P[e,v]$, implying $h(e^*)=h(s)+1$.  Since $P[a^*,s]$
and~$P[t,e^*]$ form a pair of wings for~$(a,t,s,e)$ in $G$, there is a
$d_s$ such that $(a,t,s,d_s)$ is winged in $G$.  Let~$(W_1,W_2)$ be
the pair of wings for~$(a,t,s,d_s)$ in $G$ obtained from $D$.  The
monotone~$us$-path~$S_s$ of $G$ containing~$W_1$ is a sidetrack
for~$P$, since the monotone~$tv$-path~$T_s$ of~$G$ containing $W_2$
satisfies Conditions~S\ref{S1} and~S\ref{S2} for $S_s$.  By
Lemma~\ref{lemma:lemma3}, each vertex~$x$ of~$P[s,t]$ satisfies
$h(t)\leq h(x)\leq h(s)$.  By Lemma~\ref{lemma:lemma4},~$S_s-s$
and~$P[s,t]-s$ are anticomplete in~$G$, implying that~$P[s,t]$ is a
path of $G_s$. Since~$s$ and $t$ are connected in $G_s$, the
subroutine call~$B_1(a,t)$ outputs a~$uv$-trail $P_s$ of $G$ in the
iteration with~$c=s$. Hence,~$(a,t)$ is indeed a trail marker of $B$.
One can verify that~$P_s$ is also a shortest $uv$-trail of the
$uv$-straight graph $G$, although $d_s$ need not be~$e$. Thus, we have
an~$O(n^5)$-time algorithm on an $n$-vertex general
(respectively,~$uv$-straight) graph $G$ that either reports a general
(respectively, shortest)~$uv$-trail of $G$ or ensures that $G$ is
$uv$-trailless.

\subparagraph*{Phase~2}
Since many prefixes of a long sidetrack for a shortest $uv$-trail $P$
of $G$ remain sidetracks for $P$, an edge can be deleted and then
inserted back $\Omega(n)$ times in Phase~1. Phase~2 avoids the
redundancy by processing the sidetracks in the decreasing order of
their lengths.  Let $B_2$ be the following subroutine that takes a
pair $(a,b)$ of $V(G)$ as the only argument.  Obtain in overall
$O(n^2)$ time from $D$ each set~$C_i$ with $0\leq i\leq h(v)$ that
consists of the vertices $c$ of~$G$ with~$h(c)=i$ such that $G$ admits
a winged quadruple~$(a,b,c,d_c)$ for some vertex $d_c$.  Let~$C$ be
the union of all $C_i$ with~$0\leq i\leq h(v)$.  Obtain in overall
$O(n^2)$ time from~$D$ for each vertex $c\in C$ (i) a
monotone~$uc$-path $S_c$ of $G$ containing~$a$ and (ii) a monotone
$bv$-path $T_c$ with
\[
d_{G[S_c\cup T_c]}(u,v)>\|S_c\|+\|T_c\|.
\]
Obtain the subgraph $H$ of $G$ induced by the vertices with heights at
least $h(a)$ in $O(n^2\log^2 n)$ time by the dynamic data structure of
Lemma~\ref{lemma:lemma7}.  Iteratively perform the following steps in
the decreasing order of the indices $i$ with $h(a)\leq i< h(v)$:
\begin{enumerate}
\item 
Delete from $H$ the incident edges of $N_G[S_c-c]\setminus \{c\}$ in
$G$ for all $c\in C_i$.

\item 
Insert to $H$ the incident edges of $C_i$ in $G$.

\item
Delete from $H$ all edges $xy$ of $G$ with $h(x)=i$ and $h(y)=i+1$.

\item
If $b$ is not connected to any $c\in C_i$ in $H$, then proceed to the
next iteration. Otherwise, let $c$ be an arbitrary vertex of $C_i$
that is connected to $b$ in $H$.  Exit the loop and report
the~$O(n^2)$-time obtainable concatenation $P_c$ of $S_c$ and a
shortest $cv$-path of $G[H\cup T_c]$.
\end{enumerate}
Since $S_c-c$ and $T_c-b$ are anticomplete in $G$ and the height of
each neighbor of $c$ in $H$ is at most $h(c)$, any arbitrary reported
$uv$-path $P_c$ of $G$ is a $uv$-trail of $G$.

Throughout all iterations, the incident edges of each vertex of $G$ is
deleted $O(1)$ times by Step~1, each edge of $G$ is updated $O(1)$
times by Steps~2 and~3, and each vertex $c\in C$ is queried $O(1)$
times by Step~4.  Thus, each subroutine call $B_2(a,b)$ runs in
$O(n^2\log^2 n)$ time.

Let~$P$ be an arbitrary shortest $uv$-trail of~$G$ with twist
pair~$(s,t)$.  As in Phase~1, let~$a$ (respectively,~$e$) be the
vertex of the monotone~$P[u,s]$ (respectively,~$P[t,v]$) with
$h(a)=h(t)$ (respectively,~$h(e)=h(s)$).  The rest of the phase proves
that~$(a,t)$ is a trail marker for $B_2$ by showing that an iteration
with $i\geq h(s)$ in the loop of the subroutine call~$B_2(a,t)$
reports a~$uv$-trail $P_c$ of $G$. See Figure~\ref{figure:figure6} for
an illustration.

\begin{figure}
\centering
\includegraphics[width=.7\textwidth]{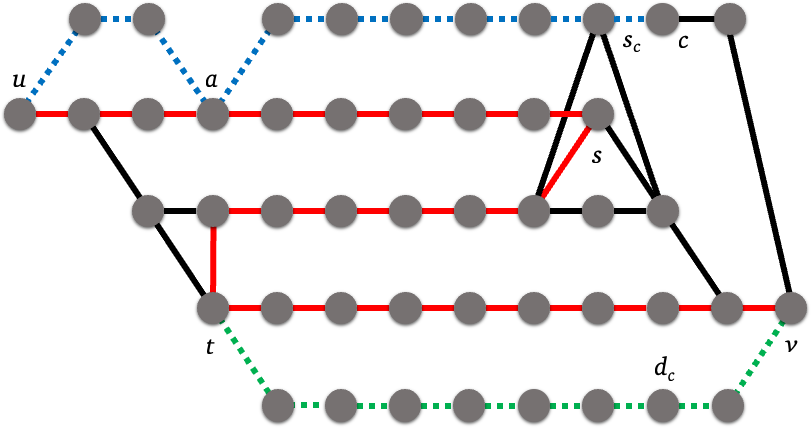}
\caption{An illustration for the proof that $B_2$ is a
  $uv$-trailblazer of degree two.  The red path denotes a shortest
  $uv$-trail $P$ of the $uv$-straight graph $G$.  The blue and green
  dotted paths denote a monotone~$uc$-path $S_c$ and a monotone
  $tv$-path $T_c$ of $G$ containing a precomputed pair of wings
  for~$(a,t,c,d_c)$ that need not coincide with $P$ except at $a$ and
  $t$. $S_c[u,s_c]$ remains a sidetrack for $P$.}
\label{figure:figure6}
\end{figure}

If an iteration of $B_2(a,t)$ with $i\geq h(s)+1$ reports a $uv$-trail
of $G$ (that need not be shortest), then we are done. Otherwise, we
show that the iteration with $i=h(s)$ has to report a~$uv$-trail of
$G$.  For each~$c\in C$ with $h(c)\geq i$, let $s_c$ be the unknown
vertex of $S_c$ with~$h(s_c)=i$. $S_c[u,s_c]$ remains a sidetrack for
$P$, since $T_c$ still satisfies Conditions~S\ref{S1} and~S\ref{S2}
for $S_c[u,s_c]$. Thus, $s_c\in C_i$.  By Lemma~\ref{lemma:lemma4},
$S_c[u,s_c]-s_c$ and~$P[s,t]-s$ are anticomplete in~$G$ even if
$S_c[u,s_c]$ need not be $S_{s_c}$.  As a result,
$P[s,t]-N_{P[s,t]}[s]$ is a path of the $H$ at the completion of
Step~1 in the $i$-th iteration.  By $s\in C_i$ and
Lemma~\ref{lemma:lemma3}, $P[s,t]$ is a path of the graph~$H$ at the
completion of Step~3 in the $i$-th iteration.  Therefore, $s$ is a
$c\in C_i$ that is connected to~$t$ in $H$.  Step~4 in the $i$-th
iteration has to output a (shortest) $uv$-trail~$P_c$ of $G$ for
some~$c\in C_i$ that need not be~$s$.  Thus, we have
an~$O(n^{4.75})$-time algorithm that either obtains a $uv$-trail of
$G$ or ensures that~$G$ is~$uv$-trailless. A reported~$uv$-trail of
$G$ by this~$O(n^{4.75})$-time algorithm need not be a shortest
$uv$-trail of~$G$, since we cannot afford to spend $O(n^2)$ time, as
in Phase~1, for each $c\in C$ that is connected to $t$ in the graph
$H$ at the~$h(c)$-th iteration to obtain a shortest~$cv$-path of
$G[H\cup T_c]$.

\section{Concluding remarks}
\label{section:section4}
We show an $O(n^{4.75})$-time algorithm for computing a $uv$-trail of
an $n$-vertex undirected unweighted graph $G$ with $\{u,v\}\subseteq
V(G)$.  The key to our improved algorithm is the observation regarding
an arbitrary shortest $uv$-trail of a $uv$-straight graph $G$
described by Lemma~\ref{lemma:lemma4}.  The inequality of
Lemma~\ref{lemma:lemma4} is stronger than the condition that~$S-c$
and~$P[s,t]-s$ are anticomplete in~$G$. As a matter of fact, the
latter suffices for our~$uv$-trailblazers in~\S\ref{section:section2}
and~\S\ref{section:section3}.  Thus, a further
improved~$uv$-trailblazer might be possible if the wings for a winged
quadruple can be obtained more efficiently.  As mentioned in Phase~1
of~\S\ref{section:section3}, a shortest $uv$-trail, if any, of a
$uv$-straight~$G$ can be obtained by our~$B_1$-based trailblazing
algorithm in~$O(n^5)$ time.  Detecting a $uv$-trail with length at
least $2d_G(u,v)$ is NP-complete~\cite[Theorem~1.6]{BergerSS21-dm}.
It is of interest to see if a shortest $uv$-trail or a $uv$-trail
having length at least $d_G(u,v)+k$ for a positive $k=O(1)$ in a
general $G$ can be obtained in polynomial time.  It is also of
interest to see whether the one-to-all (respectively, all-pairs)
version of the problem can be solved in time much lower
than~$O(n^{5.75})$ (respectively, $O(n^{6.75})$).

\section*{Acknowledgements}
Research of Hsueh-I Lu is supported by grants
110--2221--E--002--075--MY3 and 107--2221--E--002--032--MY3 from the
National Science and Technology Council (formerly known as Ministry of
Science and Technology).

%\bibliographystyle{hilabbrv}
%\bibliography{trail}

\end{document}